\newcommand{\hgTotal}{3,648}
\newcommand{\hgNew}{3,142}
\newcommand{\cqOld}{424}
\newcommand{\cspOld}{82}
\newcommand{\cqApp}{1113}
\newcommand{\cqRand}{500}
\newcommand{\cspTot}{2035}
\newcommand{\cspApp}{1090}
\newcommand{\cspOth}{82}
\newcommand{\cspRand}{863}
\newcommand{\cqAppKOneYes}{673}
\newcommand{\cqAppKOneNo}{440}
\newcommand{\cqAppKTwoYes}{432}
\newtheorem{myTheorem}{Theorem}
\newtheorem{myLemma}{Lemma}
\newtheorem{myDefinition}{Definition}
\newenvironment{myintro}%
{\list{}{\leftmargin=0.1in\rightmargin=0.1in}\item[]}%
{\endlist}
  \providecommand\BibTeX{{%
    \normalfont B\kern-0.5em{\scshape i\kern-0.25em b}\kern-0.8em\TeX}}}
\begin{document}
\setcopyright{none}
\settopmatter{printacmref=false,printccs=false,printfolios=true}	

\title{HyperBench: A Benchmark and Tool for Hypergraphs and Empirical Findings}
\titlenote{This is an extended and enhanced version of the paper ``HyperBench: 
A Benchmark and Tool for Hypergraphs and Empirical Findings" presented at PODS 2019~\cite{DBLP:conf/pods/FischlGLP19}}

\author{Wolfgang Fischl}
\affiliation{
	\institution{TU Wien}
}
\email{wfischl@dbai.tuwien.ac.at}

\author{Georg Gottlob}
\affiliation{
	\institution{University of Oxford}
}
\email{georg.gottlob@cs.ox.ac.uk}

\author{Davide Mario Longo}
\affiliation{
	\institution{TU Wien}
}
\email{dlongo@dbai.tuwien.ac.at}

\author{Reinhard Pichler}
\affiliation{
	\institution{TU Wien}
}
\email{pichler@dbai.tuwien.ac.at}


\begin{abstract}
To cope with the intractability of answering Conjunctive Queries (CQs) and solving Constraint Satisfaction Problems (CSPs), several notions of hypergraph decompositions have been proposed -- giving rise to different notions of width, noticeably, plain, generalized, and fractional hypertree width (hw, ghw, and fhw). Given the increasing interest in using such decomposition methods in practice, a publicly accessible repository of decomposition software, as well as a large set of benchmarks, and a web-accessible workbench for inserting, analyzing, and retrieving hypergraphs are called for. 

We address this need by providing (i) concrete implementations of hypergraph decompositions (including new practical algorithms), (ii) a new, comprehensive benchmark of hypergraphs stemming from disparate CQ and CSP collections, and (iii) HyperBench, our new web-inter\-face for accessing the benchmark and the results of our analyses. In addition, we describe a number of actual experiments we carried out with this new infrastructure.
\end{abstract}

\begin{CCSXML}
	<ccs2012>
	<concept>
	<concept_id>10002951.10002952.10003190.10003192</concept_id>
	<concept_desc>Information systems~Database query processing</concept_desc>
	<concept_significance>500</concept_significance>
	</concept>
	<concept>
	<concept_id>10002951.10002952.10003197.10010822</concept_id>
	<concept_desc>Information systems~Relational database query languages</concept_desc>
	<concept_significance>300</concept_significance>
	</concept>
	<concept>
	<concept_id>10003752.10003809</concept_id>
	<concept_desc>Theory of computation~Design and analysis of algorithms</concept_desc>
	<concept_significance>300</concept_significance>
	</concept>
	</ccs2012>
\end{CCSXML}

\ccsdesc[500]{Information systems~Database query processing}
\ccsdesc[300]{Information systems~Relational database query languages}
\ccsdesc[300]{Theory of computation~Design and analysis of algorithms}

\maketitle

\lstset{language=SQL,numbers=left}

\section{Introduction}
\label{sec:introduction}

In this work we study computational problems on hypergraph decompositions 
which are designed to speed up 
the evaluation of Conjunctive Queries (CQs) 
and the solution of Constraint Satisfaction Problems (CSPs).
Hypergraph decompositions have 
meanwhile found their way into commercial database systems such as LogicBlox 
\cite{DBLP:conf/sigmod/ArefCGKOPVW15,
	DBLP:journals/tods/OlteanuZ15,DBLP:journals/pvldb/BakibayevKOZ13,DBLP:journals/tods/KhamisNRR16,DBLP:conf/pods/KhamisNR16} and advanced research prototypes 
such as 
EmptyHeaded~\cite{DBLP:journals/tods/AbergerLTNOR17,DBLP:journals/corr/AbergerTOR16,DBLP:conf/sigmod/TuR15,DBLP:conf/sigmod/PerelmanR15}.  
Hypergraph decompositions have also been successfully used in 
the CSP area \cite{DBLP:journals/aicom/AmrounHA16,DBLP:journals/jetai/HabbasAS15,DBLP:conf/aiia/LalouHA09}.
In theory, 
the pros and cons of various notions of decompositions and widths are well understood 
(see \cite{DBLP:conf/pods/GottlobGLS16} for a survey).
However, from a practical point of view, many questions have remained~open. 

We want to collect hypergraphs  from different application contexts, 
analyze their structural properties and, in particular, their (generalized) hypertree width ($\ghw$ and $\hw$, respectively) 
and make this hypergraph collection together with the results of our analyses publicly available. 
The investigation of millions of CQs \cite{DBLP:journals/pvldb/BonifatiMT17,DBLP:conf/sigmod/PicalausaV11}
posed at various SPARQL endpoints suggests that these real-world CQs with atoms of arity $\leq 3$ 
have very low $\hw$: the overwhelming majority is acyclic; almost all of the rest
has $\hw = 2$. It is, however, not clear if CQs with arbitrary arity and CSPs also have 
low hypertree width, say, $\hw \leq 5$.
Ghionna et al.~\cite{DBLP:conf/icde/GhionnaGGS07} gave a positive answer to this question 
for a small set of TPC-H benchmark queries. We significantly extend their collection of CQs. 

Answering CQs and solving CSPs are fundamental tasks in Computer Science. Formally, they are the same problem, since both  correspond to the evaluation of first-order formulae over a finite structure, such that the formulae only use $\{\exists,\wedge\}$ 
as connectives
but not $\{\forall, \vee, \neg\}$.
Both problems, answering CQs and solving CSPs, 
are NP-complete \cite{DBLP:conf/stoc/ChandraM77}. 
Consequently, the search for tractable fragments of these problems has been an active research area in the database and artificial intelligence communities for several decades. 

The most powerful methods 
known to date 
for defining tractable fragments are based on various decompositions of the 
hypergraph structure 
underlying 
a given CQ or CSP. 
The most important forms of decompositions 
are 
{\it hypertree decompositions (HDs)\/}  
\cite{DBLP:journals/jcss/GottlobLS02}, 
{\it generalized 
	hypertree decompositions (GHDs)}~\cite{DBLP:journals/jcss/GottlobLS02}, and 
{\it fractional hypertree decompositions (FHDs)}~\cite{DBLP:journals/talg/GroheM14}.
These decomposition methods give rise to three
notions of width of a hypergraph $H$:
the  {\it hypertree width} $\hw(H)$,  
{\em generalized hypertree width} $\ghw(H)$, 
and {\em fractional hypertree width} 
$\fhw(H)$, 
where, $\fhw(H)\leq \ghw(H)\leq \hw(H)$ holds for every 
hypergraph $H$. 
For definitions, see Section~\ref{sec:preliminaries}. 

Both, answering CQs and solving CSPs,
become tractable if 
the underlying hypergraphs have bounded 
$\hw$, $\ghw$, or, $\fhw$ and an appropriate decomposition is given. 
This gives rise to the 
problem of recognizing if a given CQ or CSP has $\hw$, $\ghw$, or, $\fhw$ bounded by 
some constant $k$. 
Formally, for {\it decomposition\/} $\in \{$HD, GHD, FHD$\}$ and $k \geq 1$, we consider the following family~of~problems:

\smallskip
\noindent
$\checkp{(\mathit{decomposition}, k)}$\\
\begin{tabular}{ll}
	\bf Input	& hypergraph $H = (V,E)$;\\
	\bf Output	& {\it decomposition\/}  of $H$ of width $\leq k$ if it 
	exists and answer `no' otherwise.
\end{tabular}
\smallskip

Clearly, bounded $\fhw$ defines the 
largest tractable class 
while bounded 
$\hw$ defines the smallest one. 
On the other hand, the problem
$\checkp{(\mathrm{HD},k)}$ is feasible in polynomial time~\cite{DBLP:journals/jcss/GottlobLS02}
while
the $\checkp{(\mathrm{GHD},k)}$~\cite{DBLP:journals/jacm/GottlobMS09} 
and 
$\checkp{(\mathrm{FHD},k)}$~\cite{DBLP:conf/pods/FischlGP18} 
problems 
are NP-complete even for $k = 2$.

Systems to solve
the $\checkp{(\mathrm{HD},k)}$ problem 
exist \cite{DBLP:journals/jea/GottlobS08,DBLP:journals/jcss/ScarcelloGL07}. 
In contrast, 
for $\checkp{(\mathrm{GHD},k)}$
and 
$\checkp{(\mathrm{FHD},k)}$, 
apart from exhaustive search over possible decomposition trees 
(which only works for small hypergraphs),  
no implementations have been reported yet \cite{DBLP:journals/tods/AbergerLTNOR17} -- with one exception: 
very recently, an interesting approach is presented in \cite{DBLP:conf/cp/FichteHLS18},
where SMT-solving is 
applied to the $\checkp{(\mathrm{FHD},k)}$ problem.
The same approach has been later extended to solve the $\checkp{(\mathrm{HD},k)}$ problem~\cite{DBLP:conf/alenex/SchidlerS20}.
In \cite{DBLP:journals/jea/GottlobS08}, 
tests of the 
$\checkp{(\mathrm{HD},k)}$ system 
are presented. However, 
a benchmark for systematically evaluating systems 
for the 
$\checkp{(\mathit{decomposition},k)}$ problem
with {\it decomposition\/} $\in \{$HD, GHD, FHD$\}$ and $k \geq 1$
were missing so far.
This motivates our first research goals.

\begin{myintro}
	\noindent{\bf Goal 1:} Create a comprehensive,  easily extensible benchmark of hypergraphs 
	corresponding to CQs or CSPs for the analysis of hypergraph decomposition algorithms.
\end{myintro}

\begin{myintro}
	\noindent{\bf Goal 2:} Use the
	benchmark from Goal 1 to find out if the hypertree width is, in general, small enough (say $\leq 5$) to allow for efficient evaluation of CQs of arbitrary arity and of CSPs. 
\end{myintro} 

Recently, in~\cite{DBLP:conf/pods/FischlGP18}, 
the authors have identified classes of CQs for which the 
$\checkp{(\mathrm{GHD},k)}$  and $\checkp{(\mathrm{FHD},k)}$ problems
become tractable
(from now on, we only speak about CQs; of course, all results apply equally to CSPs). To this end, the Bounded 
Intersection Property (BIP) and, more generally, the 
Bounded Multi-Intersection Property (BMIP)
have been introduced. 
The maximum number $d$ of attributes shared by two 
(resp.\ $c$) atoms is referred to as the 
intersection size (resp.\ $c$-multi-intersection size) of the CQ, which 
is similar to the notion of cutset width from the CSP literature~\cite{dechter2003}. 
We say that a class of CQs satisfies the BIP (resp.\ BMIP) 
if the number of attributes shared by two
(resp.\ by a constant number $c$ of) query atoms is bounded by 
some constant $d$. 

A related property is that of bounded degree, i.e., 
each attribute only occurs in a constant number of query atoms. Clearly, the BMIP
is an immediate consequence of bounded degree.
It has been shown in~\cite{DBLP:conf/pods/FischlGP18}
that $\checkp{(\mathrm{GHD},k)}$ is solvable in polynomial time for 
CQs whose underlying hypergraphs satisfy the BMIP. 
For CQs, the BMIP and bounded degree seem natural restrictions. For CSPs, the situation is not so clear. 
This yields the following research goals.

\begin{myintro}
	\noindent{\bf Goal 3:} Use the hypergraph benchmark from Goal~1 to 
	analyze how realistic the restrictions to low (multi-)inter\-section size,
	or low degree of CQs and CSPs are.
\end{myintro} 

\begin{myintro}
	\noindent{\bf Goal 4:} Verify 
	that for hypergraphs of low intersection size, 
	the $\checkp{(\mathrm{GHD},k)}$ problem
	indeed allows for efficient algorithms that work well in practice.
\end{myintro}

The tractability results for $\checkp{(\mathrm{FHD},k)}$ \cite{DBLP:conf/pods/FischlGP18}
are significantly weaker than for 
$\checkp{(\mathrm{GHD},k)}$: they involve a factor which is at least double-exponential 
in some ``constant'' (namely $k$, the bound $\delta$ on the degree and/or the bound $d$ on the intersection size). Hence, we want to investigate if (generalized) hypertree decompositions 
could be ``fractionally improved'' 
by taking the integral edge cover at each node in the 
HD or GHD and replacing it by a fractional edge cover. 
We will thus introduce the notion of {\em fractionally improved\/} HD 
which checks if there exists an HD of width $\leq k$, 
such that replacing each integral cover by a fractional cover yields an FHD of width
$\leq k'$ for given bounds $k,k'$ with $0 < k' < k$. 

\begin{myintro}
	\noindent{\bf Goal 5:} Explore the potential of fractionally improved HDs, i.e., 
	investigate if the improvements
	achieved are significant.
\end{myintro} 

In cases where $\checkp{(\mathrm{GHD},k)}$ and 
$\checkp{(\mathrm{FHD},k)}$ are intractable, we may have to settle for 
good approximations of $\ghw$ and $\fhw$. 
For GHDs, we may thus use the 
inequality  $\ghw(H) \leq 3 \cdot \hw(H) +1$, which 
holds for every 
hypergraph $H$~\cite{DBLP:journals/ejc/AdlerGG07}.
In contrast, for FHDs, the best known general, polynomial-time approximation is cubic. More precisely, 
in~\cite{DBLP:journals/talg/Marx10}, a polynomial-time algorithm is presented which, 
given a hypergraph $H$ with $\fhw(H) = k$,
computes an FHD of width $\mathcal{O}(k^3)$. 
In~\cite{DBLP:conf/pods/FischlGP18}, it is shown that a polynomial-time approximation up to a logarithmic factor is 
possible for any class of hypergraphs with bounded 
Vapnik--Chervonenkis dimension (VC-dimension; see
Section~\ref{sec:preliminaries} for a precise definition).
The problem of efficiently approximating the $\ghw$ 
and/or $\fhw$ 
leads us to the following goals. 

\begin{myintro}
	\noindent{\bf Goal 6:} Use the
	benchmark from Goal~1 to 
	analyze if, in practice, $\hw$ and $\ghw$ indeed differ by factor 3 or, if $\hw$ is typically much closer to $\ghw$
	than this worst-case bound.
\end{myintro} 

\begin{myintro}
	\noindent{\bf Goal 7:} Use the
	benchmark from Goal~1 to 
	analyze how realistic the restriction to small VC-dimen\-sion of 
	CQs and CSPs is.
\end{myintro} 

\smallskip
\noindent
{\bf Results.}
Our main results are as follows:

\smallskip
$\bullet$ \
We provide {\em HyperBench\/}, a comprehensive hypergraph benchmark of
initially over
3,000 hypergraphs (
see Section~\ref{sec:hyperbench}). 
This benchmark is exposed by a web interface, 
which allows the user to retrieve the hypergraphs or groups of hypergraphs together with a broad spectrum of properties of these hypergraphs, such as 
lower/upper bounds on $\hw$ and $\ghw$, (multi-)intersection size, degree, 
etc.

\smallskip

$\bullet$  \
We extend the software for HD computation from~\cite{DBLP:journals/jea/GottlobS08}
to also solve the 
$\checkp{(\mathrm{GHD},k)}$ problem. 
For a given hypergraph $H$, our system first computes the intersection size of $H$ and then applies the 
$\ghw$-algo\-rithm from~\cite{DBLP:conf/pods/FischlGP18},
which is parameterized by the
intersection size. We implement several improvements and 
we further extend the system to compute
also  ``fractionally improved''~HDs.

\smallskip

$\bullet$  \
We carry out an empirical analysis of the hypergraphs in the HyperBench benchmark. This analysis demonstrates, especially for real-world instances, 
that the restrictions to BIP, BMIP, bounded degree, and bounded VC-dimension are astonishingly realistic. Moreover,
on all hypergraphs in the HyperBench benchmark, we 
run our $\hw$- and $\ghw$-programs to identify 
(or at least bound) their $\hw$ and $\ghw$.
An interesting observation of our empirical study is that apart from the CQs 
also a significant portion of CSPs in our benchmark 
has small hypertree width (all non-random CQs have $\hw \leq 3$ and 
over 60\% of CSPs stemming from applications have 
$\hw \leq 5$). Moreover, for $\hw \leq 5$, 
in all of the 
cases where the $\ghw$-computation terminates, $\hw$ and $\ghw$ 
have identical values.

\smallskip

$\bullet$  \
In our study of the $\ghw$ of the hypergraphs in the HyperBench benchmark, 
we observe that a straightforward implementation of the algorithm from~\cite{DBLP:conf/pods/FischlGP18} 
for hypergraphs of 
low intersection size is too slow in many cases. We therefore present a new approach (based on so-called
``balanced separators'') with promising experimental results. It is interesting to note that
the new approach 
works particularly well in those situations which are particularly hard for the straightforward implementation,
namely hypergraphs $H$ where the test if $\ghw \leq k$ for given $k$ gives a ``no''-answer. 
Hence, combining the different approaches is very effective.

\smallskip

\noindent 
{\bf Structure.} This paper is structured as follows:
In Section~\ref{sec:related-work}, we summarize related work.
In Section~\ref{sec:preliminaries}, we recall some basic notions.  
In Section~\ref{sec:ghd-algs}, we describe our algorithms for solving the 
$\checkp{(\mathrm{GHD},k)}$ problem.
In Section~\ref{sec:hyperbench}, we present our system and test environment as well as our HyperBench 
benchmark.
This section also contains the first results of our empirical study of the hypergraphs in 
this benchmark.
In Section~\ref{sec:experiments} we report on the performance of the implementations of our GHD-algorithms as well as
a further extension of the system to allow for the computation of fractionally improved HDs.
We conclude in Section~\ref{sec:conclusion} by highlighting the most important lessons learned  from our empirical study and 
by identifying some appealing directions for future work.

\section{Related Work}
\label{sec:related-work}

We distinguish several types of works that are highly relevant to ours. 
The works most closely related are the descriptions of HD, GHD and FHD algorithms in~\cite{DBLP:journals/jcss/GottlobLS02,DBLP:conf/pods/FischlGP18} and the implementation of HD computation by the 
\detkdecomp\ 
program reported in~\cite{DBLP:journals/jea/GottlobS08}.
We have extended these works in several ways. Above all, we have incorporated our analysis tool 
(reported in 
Sections~\ref{sec:hyperbench}~and~\ref{sec:experiments}) and the GHD and FHD computations 
(reported in 
Sections~\ref{sec:ghd-algs}~and~\ref{sec:fhd-algs})
into the \detkdecomp\ program -- resulting in our \newdetkdecomp\ library, which is openly available on GitHub
at~\url{https://github.com/dmlongo/newdetkdecomp}. 
For the GHD computation, we have added heuristics to speed up the basic algorithm from~\cite{DBLP:conf/pods/FischlGP18}. Moreover, we have proposed a 
novel approach via balanced separators, which allowed us to significantly extend the range of instances for which 
the GHD computation terminates in reasonable time. 
We have also introduced a new form of decomposition method: the 
fractionally improved decompositions (see Section \ref{sec:fhd-algs}), which allow for a practical, lightweight form of FHDs.

The second important input to our work comes from the various sources 
\cite{
	DBLP:journals/pvldb/ArocenaGCM15%
	,BenediktCQs%
	,DBLP:conf/pods/BenediktKMMPST17%
	,berg2017maxsat%
	,DBLP:conf/icde/GeertsMPS14%
	,DBLP:journals/jea/GottlobS08%
	,DBLP:journals/pvldb/LeisGMBK015%
	,DBLP:conf/sigmod/JainMHHL16%
	,tpch}
which we took our CQs and CSPs from. 
Note that our main goal was not to add further CQs and/or CSPs to these benchmarks. 
Instead, we have aimed at taking and combining existing, openly accessible benchmarks of CQs and CSPs, 
and convert them into hypergraphs, which are then thoroughly analyzed. Finally, the hypergraphs and the analysis results are made openly accessible again. 

The third kind of works highly relevant to ours are previous analyses of CQs and CSPs. 
To the best of our knowledge, Ghionna et al.~\cite{DBLP:conf/icde/GhionnaGGS07} presented the first systematic study of HDs 
of benchmark CQs from TPC-H. However, Ghionna et al.\ 
pursued a research goal different  from ours in that they primarily wanted to find out to what extent HDs can actually 
speed up query evaluation. They achieved very positive results in this respect, which 
have recently been confirmed by the work of Perelman et al.\ \cite{DBLP:conf/sigmod/PerelmanR15}, Tu et al. \cite{DBLP:conf/sigmod/TuR15} and Aberger et al. \cite{DBLP:journals/tods/AbergerLTNOR17} on query evaluation using FHDs.
As a side result, Ghionna et al.\  also detected that CQs tend to have low hypertree width (a finding which was later confirmed in 
\cite{DBLP:journals/pvldb/BonifatiMT17,DBLP:conf/www/BonifatiMT19,DBLP:conf/sigmod/PicalausaV11} and also in our study).  
In a pioneering effort, Bonifati, Martens, and Timm \cite{DBLP:journals/pvldb/BonifatiMT17} have recently
analyzed an unprecedented, massive amount of  queries: they investigated 180,653,910 queries from 
(not openly available) query logs of several popular SPARQL endpoints. After elimination of duplicate queries, there were 
still 
56,164,661 queries left, out of which 26,157,880 queries were in fact CQs. 
The authors thus significantly extend previous work by 
Picalausa and Vansummeren \cite{DBLP:conf/sigmod/PicalausaV11}, who 
analyzed 3,130,177 SPARQL queries posed by humans and software robots at the DBPedia SPARQL endpoint. 
The focus in \cite{DBLP:conf/sigmod/PicalausaV11} is on structural properties of SPARQL queries such as keywords used and variable structure in 
optional patterns. There is one paragraph devoted to CQs, where it is noted that 99.99\% of ca.\  2 million CQs considered 
in \cite{DBLP:conf/sigmod/PicalausaV11} are acyclic. 

Many of the CQs  (over 15 million) analyzed in~\cite{DBLP:journals/pvldb/BonifatiMT17}
have arity 2 (here we consider the maximum arity  of all atoms in a CQ as the arity of the query), which means that all triples in such a SPARQL query have a constant at 
the predicate-position. 
Bonifati et al.\ made several interesting observations concerning the shape of these graph-like queries. For instance, they detected that
exactly one of these queries has $\tw = 3$, while all others have
$\tw \leq 2$ (and hence $\hw \leq 2$).
As far as the CQs of arity 3 are concerned (for CQs expressed as SPARQL queries, this is the maximum arity achievable), among many characteristics, also the hypertree width was computed by using the original $\detkdecomp$ program from~\cite{DBLP:journals/jea/GottlobS08}. 
Out of 6,959,510 CQs of arity 3, 
only 86 (i.e. 0.01\textperthousand) 
turned out to have $\hw = 2$ 
and 8 queries had $\hw = 3$
, while all other CQs of arity 3 are acyclic.  
Our analysis confirms that, also for non-random CQs of arity $> 3$, the hypertree width indeed tends to be low, with the majority of queries being even acyclic.

Bonifati et al. continued on this line of work and analyzed in~\cite{DBLP:conf/www/BonifatiMT19} a yet bigger collection of SPARQL queries coming from Wikidata.
This repository of 208,215,209 queries was first made available by Malyshev et al. for the work in~\cite{DBLP:conf/semweb/MalyshevKGGB18}.
Bonifati et al. divided this dataset into four disjoint sets:
queries for	which the HTTP request was successful, further partitioned into organic and robotic queries;
and timeout queries, again further partitioned into organic and robotic queries.
Organic queries are the ones classified by Malyshev et al. as posed by humans, while robotic queries have been classified as produced by synthetic algorithms.
While Malyshev et al. analyzed only the sets of successful queries, Bonifati et al. extended their study to timeout queries.
The latter turned out to be the most interesting w.r.t. structural analysis.
As in~\cite{DBLP:journals/pvldb/BonifatiMT17}, the focus of~\cite{DBLP:conf/www/BonifatiMT19} is on examining property paths of SPARQL queries and, in particular, having a clear picture of the structural characteristics of recursive properties.

In~\cite{DBLP:conf/www/BonifatiMT19}, structural analysis builds upon conjunctive queries and variants thereof.
In total, Bonifati et al. identify 176,679,495 robotic and 342,576 organic queries for the largest fragment of CQs (namely C2RPQ+), which constitutes circa 85\% of the whole dataset.
Structural analysis showed that these queries are \emph{mildly cyclic}, i.e., their treewidth is bounded by a small constant.
In particular, for all these queries $\tw \leq 4$ holds.
For a different fragment of CQs (namely $\mathrm{CQ_{OF+}}$), a structural analysis based on hypergraphs is more suitable, thus they computed $\hw$ for a total of 1,915,550 $\mathrm{CQ_{OF+}}$ queries.
It turns out that 590,005 queries have $\hw = 2$, while the rest has $\hw = 1$, i.e., they are acyclic.

For the analysis of CSPs, much less work has been done. 
Although it has been shown that exploiting {(hyper-)}tree
decompositions may significantly improve the performance of CSP solving 
\cite{DBLP:journals/aicom/AmrounHA16,DBLP:journals/jetai/HabbasAS15,DBLP:conf/sara/KarakashianWC11,DBLP:conf/aiia/LalouHA09}, 
a systematic study on the (generalized) hypertree width of CSP instances has only been carried out by few works \cite{DBLP:journals/jea/GottlobS08,DBLP:conf/aiia/LalouHA09,Schafhauser06}. To the best of our knowledge, we are the first to analyze the \hw, \ghw, and \fhw\ of ca.\ 2,000 CSP instances, where most of these instances have not been studied in this respect before.

It should be noted that the focus of our work is different from the above mentioned previous works: above all, we wanted to test the 
practical feasibility of various algorithms for HD, GHD, and FHD computation (including both, previously presented algorithms and new ones developed as part of this work). As far as our repository of hypergraphs (obtained from CQs and CSPs) is concerned, we emphasize open accessibility. 
Thus, users can analyze their CQs and CSPs (with our implementations of HD, GHD, and FHD algorithms) or they can analyze 
new decomposition algorithms (with our hypergraphs, which cover quite a broad range of characteristics). 
In fact, in the recent works on HD and FHD computation via SMT solving~\cite{DBLP:conf/cp/FichteHLS18,DBLP:conf/alenex/SchidlerS20}, 
the HyperBench benchmark has 
already been used for 
the experimental evaluation. 
In~\cite{DBLP:conf/cp/FichteHLS18} a novel approach to $\fhw$ computation via an efficient encoding of the 
check-problem for FHDs to SMT (SAT modulo Theory) is presented. The tests were carried out with 2,191 hypergraphs from 
the initial version of the HyperBench.
For all of these hypergraphs we have established at least some upper bound on the $\fhw$ either by our $\hw$-computation
or by one of our new algorithms presented in Sections~\ref{sec:ghd-algs}~and~\ref{sec:fhd-algs}.
In contrast, the exact algorithm in~\cite{DBLP:conf/cp/FichteHLS18} found FHDs only for 1.449 instances (66\%). 
In 852 cases, both our algorithms and the algorithm in~\cite{DBLP:conf/cp/FichteHLS18} found FHDs of the same width;
in 560 cases, an FHD of lower width was found in \cite{DBLP:conf/cp/FichteHLS18}.
By using the same benchmark for the tests,
the results  in~\cite{DBLP:conf/cp/FichteHLS18} and ours 
are comparable and have thus provided 
valuable input for future improvements of the algorithms by 
combining the 
different strengths and weaknesses of the two 
approaches.

The use of the same benchmark has also allowed us to provide feedback to the authors of~\cite{DBLP:conf/cp/FichteHLS18} for debugging their system: in 9 out of 2,191  cases, the ``optimal'' value for the $\fhw$ computed in~\cite{DBLP:conf/cp/FichteHLS18} was apparently erroneous, since it was higher than the $\hw$ found out by our analysis; note that upper bounds on 
the width are, in general, more reliable than lower bounds since it is easy to verify if a given decomposition indeed has the desired properties, whereas ruling out the existence of a decomposition of a certain width is a complex and error-prone task.

The work in~\cite{DBLP:conf/ijcai/GottlobOP20} represents a follow-up of this paper and the conference version~\cite{DBLP:conf/pods/FischlGLP19}.
The main goal of~\cite{DBLP:conf/ijcai/GottlobOP20} consists in providing major improvements in computing GHDs.
To this aim the authors present a parallel algorithm for computing GHDs 
based on the balanced separator method described in this paper,
and a hybrid approach that combines the parallel and sequential decomposition algorithms.
More specifically, the hybrid approach uses a parallel version of the balanced separator algorithm to split a large hypergraph
into smaller components and then uses the sequential version of $\detkdecomp$ 
from~\cite{DBLP:journals/jea/GottlobS08}
to quickly decompose them.
Moreover, the authors of ~\cite{DBLP:conf/ijcai/GottlobOP20} propose new methods to simplify the input hypergraph 
and apply new heuristics to reduce the search space.
Our HyperBench benchmark was fundamental for the experimental evaluation of all these techniques.
Indeed, the version of HyperBench presented in~\cite{DBLP:conf/pods/FischlGLP19} was used as a baseline to show that
the newly proposed algorithms can be used to efficiently compute GHDs on modern machines for a wide range of CSP instances.

Similarly to our analysis of properties that make $\checkp{(\mathit{decomposition},k)}$ tractable or easy to approximate,
HyperBench has been used to empirically test the validity of theoretical hypotheses.
In~\cite{korhonen2019potential}, the edge clique cover size of a graph is identified as a parameter allowing fixed-parameter-tractable algorithms for enumerating potential maximal cliques.
The latter can be used to compute exact $\ghw$ and $\fhw$.
An edge clique cover of a graph is a set of cliques of the graph that covers all of its edges.
In case of a CSP with $n$ variables and $m$ constraints, the set of constraints is an edge clique cover of the underlying (hyper)graph.
Thus, this property can be exploited for CSPs having $n > m$ and HyperBench has been used to verify that it happens in circa 23\% of the instances.

\section{Preliminaries}
\label{sec:preliminaries}

\subsection{CQs, CSPs and Hypergraphs}
We treat \emph{conjunctive queries} (CQs) and \emph{constraint satisfaction problems} (CSPs) as first-order formulae using only connectives in $\{\exists, \land\}$ and disallowing $\{\forall,\vee,\neg\}$.

A \emph{hypergraph} $H = (V(H), E(H))$ is a pair consisting of a set of vertices $V(H)$ and a set of non-empty (hyper)edges $E(H) \subseteq 2^{V(H)}$.
We assume w.l.o.g.\  that there are no isolated vertices, i.e., for each $v \in V(H)$, there is at least one edge $e \in E(H)$ such that $v \in e$.
We can thus identify a hypergraph $H$ with its set of edges $E(H)$ with the understanding that 
$V(H) = \{ v \in e \mid e \in E(H) \} $.  A \emph{subhypergraph} $H'$ of $H$ is then simply a subset of (the edges of) $H$.

Given a formula $\phi$ corresponding to either a CQ or a CSP, the hypergraph $H_\phi$ corresponding to $\phi$ has $V(H_\phi) = \mathrm{vars}(\phi)$ and, for each atom $a$ of $\phi$, $\mathrm{vars}(a) \in E(H_\phi)$.

We are frequently dealing with sets of sets of vertices (e.g., sets of edges). For $S \subseteq 2^{V(H)}$, we write 
$\bigcup S$ and $\bigcap S$ as a short-hand for taking the union or intersection, respectively, of this set of sets of vertices, i.e., 
for $S = \{s_1, \dots, s_\ell\}$, we have $\bigcup S = \bigcup_{i=1}^\ell s_i$ and 
$\bigcap S = \bigcap_{i=1}^\ell s_i$.

\subsection{Hypergraph Decompositions and Width Measures}
We consider here three notions of hypergraph decompositions with their associated notions of width.
To this end, we first introduce the notion of (fractional) edge covers.

Let $H = (V(H),E(H))$ be a hypergraph and consider an \emph{edge weight function} $\gamma \colon E(H) \to [0,1]$.
We define  the set $B(\gamma)$ of all vertices covered by $\gamma$ and the weight of $\gamma$ as 
\begin{eqnarray*}
	B(\gamma) & = &\left\{ v\in V(H) \mid \sum_{e\in E(H), v\in e} \gamma(e) \geq 1 
	\right\}, \\
	\ \mathit{weight}(\gamma) & =& \sum_{e \in E(H)} \gamma(e).
\end{eqnarray*}
We call $\gamma$ a \emph{fractional edge cover} of a set $X \in V(H)$ by edges in $E(H)$, if $X \subseteq B(\gamma)$.
We also consider an \emph{integral edge cover} as a function $\lambda \colon E(H) \to \{0,1\}$, i.e., a fractional edge cover whose values are restricted to $\{0,1\}$ values. Following \cite{DBLP:journals/jcss/GottlobLS02}, we can also treat $\lambda$ as a set with $\lambda \subseteq E(H)$  (namely, the set of edges $e$ with $\lambda(e) = 1$) and the weight as the cardinality of such a set of edges.
In the following, to emphasize the nature of the function we are dealing with, we will use $\gamma$ for fractional edge covers and $\lambda$ for integral edge covers.  

We now introduce some relevant notions of decompositions.

A tuple $\langle T, (B_u)_{u \in T} \rangle$ is a \emph{tree  decomposition} (TD) of hypergraph $H =(V(H),E(H))$,
if $T = (N(T),E(T))$ is a tree, every $B_u$ is a subset of $V(H)$, and the following conditions are satisfied:
\begin{enumerate}
	\item[(1)] 
	For every edge $e \in E(H)$, there is a node $u$ in $T$, such that  $e \subseteq B_u$, and
	\item[(2)] for every vertex $v \in V(H)$,  $\{u \in T \mid v \in B_u\}$ is connected in $T$.
\end{enumerate}
The vertex sets $B_u$ are usually referred to as the \emph{bags} of the TD. 
By slight abuse of notation, we write $u \in T$ to express that $u$ is a node in $N(T)$. Condition (2) is also called 
the ``connectedness condition".

We use the following notational conventions throughout
this paper. To avoid confusion, we will consequently refer
to the elements in $V(H)$ as {\em vertices\/} of the hypergraph and to
the elements in $N(T)$ as the {\em nodes\/} of the decomposition. For
a node $u \in T$, we write $T_u$ to denote the subtree of $T$ rooted
at $u$. By slight abuse of notation, we will often write $u' \in T_u$
to denote that $u'$ is a node in the subtree $T_u$ of $T$.
Finally, we define $V(T_u) = \bigcup_{u' \in T_u} B_{u'}$.

A \emph{fractional  hypertree decomposition} (FHD) of a hypergraph  $H=(V(H),E(H))$ is a tuple\\ 
$\left< T, (B_u)_{u\in T}, (\gamma_u)_{u\in T} \right>$, such that 
$\left< T, (B_u)_{u\in T}\right>$ is a TD of $H$ and the following condition holds:
\begin{enumerate}
	\item[(3)] For each $u\in T$,  $B_u \subseteq  B(\gamma_u)$ holds, i.e., $\gamma_u$ is a fractional 
	edge cover of $B_u$.
\end{enumerate}
Following our notational convention, a \emph{generalized  hypertree decomposition} (GHD) is an FHD, where $\lambda_u$ is an integral edge weight function for every $u \in T$.
Hence, by condition (3), $\lambda_u$ is an integral edge cover of $B_u$.
A \emph{hypertree decomposition} (HD) of $H$ is a GHD with the following additional condition (referred to as the ``special condition'' in~\cite{DBLP:journals/jcss/GottlobLS02}):
\begin{enumerate}
	\item[(4)] For each $u\in T$, $ V(T_u) \cap B(\gamma_u) \subseteq B_u$, where 
	$V(T_u)$ denotes the union of all bags in the subtree of $T$ rooted at $u$.
\end{enumerate}
Because of condition (4), it is important to consider $T$ as a \emph{rooted} tree in case of HDs. 
For TDs, FHDs, and GHDs,  the root of $T$ can be arbitrarily chosen or simply ignored.
The \emph{width} of an FHD, GHD, or HD is defined as 
the maximum weight of the functions $\gamma_u$  over all nodes $u \in T$.
The fractional hypertree width,  generalized hypertree width, and
hypertree width of $H$ (denoted $\fhw(H)$, 
$\ghw(H)$, and $\hw(H)$) is the minimum width over all FHDs, GHDs, and HDs~of~$H$.

\subsection{Components and Separators}
For a set $U \subseteq V(H)$ of vertices, we define $[U]$-components of a hypergraph $H$ as follows:
\begin{itemize}
	\item We define {\em $[U]$-adjacency\/} as a binary relation on $E(H)$ as follows: 
	two edges $e_1,e_2 \in E(H)$ are {\em $[U]$-adjacent\/}, if $(e_1 \cap e_2) \setminus U \neq \emptyset$ holds.
	\item We define {\em $[U]$-connectedness\/} as the transitive closure of the {\em $[U]$-adjacency\/} relation.
	\item A {\em $[U]$-component\/} of $H$ is a maximally $[U]$-connected subset $C \subseteq E(H)$ .
\end{itemize}

For a set of edges $S \subseteq E(H)$, we say that $C$ is 
``\emph{$[S]$-connected}'' or 
an ``\emph{$[S]$-component\/}'' as a short-cut for 
$C$ is ``$[W]$-connected'' or a ``$[W]$-component'', respectively, 
with $W = \bigcup_{e \in S} e$. We also call $S$ a \emph{separator} in this context.
The \emph{size of an $[S]$-component} $C$ is defined as the number of edges in the component.
For a hypergraph $H$ and a set of edges $S \subseteq E(H)$, we say that $S$ is a 
\emph{balanced separator} if all $[S]$-components of $H$ have a size 
$\leq \frac{\lvert E(H) \rvert}{2}$.
It was shown in~\cite{DBLP:journals/ejc/AdlerGG07} that, for every GHD $\langle T, (B_u)_{u \in T}, (\lambda_u)_{u \in T} \rangle$ of a hypergraph $H$, 
there exists a node $u \in T$ such that $\lambda_u$ is a balanced separator of $H$.
This property can be made use of when searching for a GHD of size $k$ of a 
hypergraph $H$ as we shall show  in 
Section~\ref{sec:balsep} below.

\subsection{Computing Hypertree Decompositions (HDs)}
\label{sect:ComputingHDs}

We briefly recall the basic principles of the \detkdecomp program from~\cite{DBLP:journals/jea/GottlobS08} for computing HDs.
It is relevant for our work in that it is the first implementation of the original top-down nondeterministic HD algorithm presented in~\cite{DBLP:journals/jcss/GottlobLS02}.
Even though the first implementation of a deterministic HD algorithm is \optkdecomp~\cite{DBLP:conf/dexa/GottlobLS99}, it is based on a different characterization of $\hw$, which is not suitable for our purposes.

For fixed $k \geq 1$, \detkdecomp tries to construct an HD of a hypergraph $H$ 
in a top-down
manner. It thus maintains a set $C$ of edges, which is initialized to $C = E(H)$. 
For a node $u$ in the HD (initially, this is the root of the HD), 
it ``guesses'' an edge cover $\lambda_u$, i.e., $\lambda_u \subseteq E(H)$ and 
$\lvert \lambda_u \rvert \leq k$. 
For fixed $k$, there are only polynomially many possible values $\lambda_u$. 
\detkdecomp then proceeds by determining all $[\lambda_u]$-components $C_i$ with $C_i \subseteq C$. 
The special condition of HDs restricts the possible choices for 
$B_u$ and thus guarantees that the $[\lambda_u]$-components inside $C$ 
and the $[B_u]$-components inside $C$ coincide. 
This is the crucial property for ensuring polynomial time complexity
of HD-computation -- at the price of possibly missing GHDs with a lower width. 

Now let $C_1, \dots, C_\ell$ denote the $[\lambda_u]$-components with $C_i \subseteq C$.
By the maximality of components, these sets $C_i$ are pairwise disjoint.
Moreover, it was shown in~\cite{DBLP:journals/jcss/GottlobLS02} that 
if $H$ has an HD of width $\leq k$, then it also has an HD of width 
$\leq k$ such that the edges in each $C_i$ are ``covered'' in different subtrees below $u$. 
More precisely, this means that $u$ has $\ell$ child nodes $u_1, \dots, u_\ell$, such that 
for every $i$ and every $e \in C_i$, there exists a node $u_e$ in the subtree rooted at $u_i$
with $e \subseteq B_{u_e}$. Hence, \detkdecomp recursively searches for an HD of the 
hypergraphs $H_i$ with $E(H_i) = C_i$ and $V(H_i) = \bigcup C_i$ with the slight extra feature that 
also edges from $E(H) \setminus C_i$ are allowed to be used in the $\lambda$-labels of these HDs.

\subsection{Favorable Properties of Hypergraphs}
We are interested in certain structural properties of hypergraphs that make the $\checkp{(\mathrm{GHD},k)}$ and $\checkp{(\mathrm{FHD},k)}$ problems tractable or efficient to approximate for large classes of hypergraphs.
We refer to the terminology of~\cite{DBLP:journals/corr/abs-2002-05239} as it makes uniform the one originally introduced in~\cite{DBLP:conf/pods/FischlGP18}.

\begin{myDefinition}
	For $c \geq 1$, $d \geq 0$, a hypergraph $H = (V(H),E(H))$ is a \emph{$(c,d)$-hypergraph} if the intersection of any $c$ edges in $E(H)$ has at most $d$ elements, i.e., for every subset $E' \subseteq E(H)$ with $\lvert E' \rvert = c$, we have $\lvert \bigcap E' \rvert \leq d$. 
\end{myDefinition}

\begin{myDefinition}
	A hypergraph $H = (V(H),E(H))$ has \emph{$c$-multi-intersection size} $d$ if $H$ is a $(c,d)$-hypergraph. 
	In the special case of $c=2$, we speak of intersection size of $H$, while if we do not have a particular $c$ in mind, we simply speak of multi-intersection size of $H$.
\end{myDefinition}

\begin{myDefinition}
	\label{def:bip}
	A class $\mathcal{C}$ of hypergraphs satisfies the \emph{bounded multi-intersection property (BMIP)}, if there exist $c \geq 1$ and $d \geq 0$, such that every $H$ in $\mathcal{C}$ is a $(c,d)$-hypergraph.
	As a special case, $\mathcal{C}$ satisfies the \emph{bounded intersection property (BIP)}, if there exists $d \geq 0$, such that every $H$ in $\mathcal{C}$ is a $(2,d)$-hypergraph.
\end{myDefinition}

There are further relevant properties of (classes of)
hypergraphs: bounded degree and bounded Vapnik–Chervonenkis
dimension (VC-dimension).

\begin{myDefinition}
	The \emph{degree} $\mathrm{deg}(H)$ of a hypergraph $H$ is defined as he maximum number $\delta$ of hyperedges in which a vertex occurs, i.e., $\delta = \max_{v \in V(H)} \lvert \{e \in E(H) \mid v \in E(H)\} \rvert$.
	A class $\mathcal{C}$ of hypergraphs has bounded degree, if there exists $\delta \geq 1$, such that every hypergraph $H$ in $\mathcal{C}$ has degree $\leq \delta$. 
\end{myDefinition}

\begin{myDefinition}[\cite{Vapnik1971}]
	\label{def:vc}
	Let $H=(V(H),E(H))$ be a hypergraph, and $X\subseteq V(H)$ a set of vertices. 
	Denote by 
	$E(H)|_X =\{X \cap e\, |\, e\in E(H)\}$. $X$ is called {\em shattered} if 
	$E(H)|_X=2^X$.
	The {\em Vapnik-Chervonenkis dimension (VC dimension)} of $H$ is 
	the maximum cardinality of a shattered subset of $V$. 
	We say that a class $\mathcal{C}$ of hypergraphs  has bounded VC-dimension, if there exists $v \geq 1$, 
	such that every hypergraph $H\in \mathcal{C}$ has VC-dimension $\leq v$.
\end{myDefinition}

Note that a hypergraph $H$ with degree bounded by $\delta$ is a $(\delta+1,0)$-hypergraph.
Thus, bounded degree implies the BMIP, which in turn implies bounded VC-dimension~\cite{DBLP:conf/pods/FischlGP18}.

The aforementioned properties help to solve or approximate the problems $\checkp{(\mathrm{GHD},k)}$ and $\checkp{(\mathrm{FHD},k)}$:
\begin{myTheorem}[\cite{DBLP:conf/pods/FischlGP18,DBLP:journals/corr/abs-2002-05239}]
	\label{th:tractability}
	Let $\mathcal{C}$ be a class of hypergraphs. 
	\begin{itemize}
		\item If $\mathcal{C}$ has the BMIP, then the $\checkp{(\mathrm{GHD},k)}$ problem is solvable in polynomial time for arbitrary $k \geq 1$. 
		Consequently, this tractability holds if $\mathcal{C}$ has bounded degree or the BIP (which each imply the BMIP)~\cite{DBLP:conf/pods/FischlGP18}.
		
		\item If $\mathcal{C}$ has bounded degree or the BIP, then the $\checkp{(\mathrm{FHD},k)}$ problem is solvable in polynomial time for arbitrary 
		$k \geq 1$~\cite{DBLP:conf/pods/FischlGP18,DBLP:journals/corr/abs-2002-05239}.
		
		\item If $\mathcal{C}$ has bounded VC-dimension, then the $\fhw$ can be approximated in polynomial time up to a 
		log-factor~\cite{DBLP:conf/pods/FischlGP18}.
	\end{itemize}
\end{myTheorem}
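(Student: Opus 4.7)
My plan is to treat the three bullets separately, although the first two will share a common algorithmic skeleton.

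For the first bullet (BMIP implies tractability of $\checkp{(\mathrm{GHD},k)}$), I would generalize the top-down paradigm of \detkdecomp{} recalled in Section~\ref{sect:ComputingHDs}. The algorithm would process subproblems of the form $(C,U)$, where $C \subseteq E(H)$ is a set of edges still to be decomposed and $U \subseteq V(H)$ is a set of vertices that must appear in the bag at the root of the subtree for $C$. At such a node we guess an integral edge cover $\lambda_u$ with $|\lambda_u| \leq k$ (polynomially many candidates), and let $B_u$ range over subsets of $\bigcup \lambda_u$ containing $U$ and covering every edge that has to be placed at $u$. For each $[\lambda_u]$-component $C_i$ inside $C$, the new ``must-contain'' set is $B_u \cap V(C_i)$, and we recurse. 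The crucial combinatorial input, and also the main obstacle, is showing that under the BMIP only polynomially many distinct subproblems $(C,U)$ arise. This uses the fact that $B_u \cap V(C_i)$ is a union of at most $c-1$ pairwise intersections between edges of $\lambda_u$ and edges in $C_i$, each of size at most $d$; memoization then gives polynomial total time.

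For the second bullet (BIP or bounded degree implies tractability of $\checkp{(\mathrm{FHD},k)}$), I would reuse the same $(C,U)$-skeleton, but at each node guess a fractional cover $\gamma_u$ of weight $\leq k$ instead of an integral one. The apparent difficulty is that there is a continuum of fractional covers; the key observation is that only the set $B(\gamma_u)$ matters combinatorially, and the family of candidate bags $B_u$ admits a polynomial-size canonical description. Under the BIP, $B_u$ is determined by at most $k$ ``anchor'' edges together with their pairwise intersections of size $\leq d$; under bounded degree, an analogous bound follows from the fact that each vertex in $B_u$ lies in at most $\delta$ edges. Once a candidate $B_u$ is fixed, checking whether it admits a fractional cover of weight $\leq k$ reduces to a polynomial-size LP. Combined with memoization over $(C,U)$, this yields polynomial time.

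For the third bullet (bounded VC-dimension implies a polynomial log-factor approximation of $\fhw$), my plan is to appeal to the Haussler--Welzl / Br\"onnimann--Goodrich $\varepsilon$-net machinery. Starting from an optimal FHD of width $k^* = \fhw(H)$, I would leave the tree $T$ and bags $B_u$ untouched and replace each $\gamma_u$ by an integral cover $\lambda_u$ of $B_u$ via the bound that a set system of bounded VC-dimension admits an $\varepsilon$-net of size $O((1/\varepsilon)\log(1/\varepsilon))$. This yields $|\lambda_u| = O(k^* \log k^*)$ at every node, hence a GHD (and therefore an FHD) of width $O(k^* \log k^*)$. To make this constructive, we do not need the optimal FHD in hand: for each candidate bag $B_u$ arising in a generic top-down search, we compute a fractional cover by LP and round it by the same argument. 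The main technical obstacle I anticipate is controlling the VC-dimension of the relevant projected set system at each bag: the $\varepsilon$-net theorem is being applied to the trace of $E(H)$ on $B_u$ rather than to $H$ directly, and verifying that the projection inherits bounded VC-dimension is where one has to invoke a preservation lemma from the VC literature carefully.
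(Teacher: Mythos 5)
First, note that this theorem is not proved in the paper at all: it is imported verbatim from the cited works, and the only ingredient the paper itself recalls is the BIP-case construction of the subedge set $f(H,k)$ in Section~4.1, whose whole point is to reduce $\checkp{(\mathrm{GHD},k)}$ on $H$ to $\checkp{(\mathrm{HD},k)}$ on $H' = (V(H), E(H) \cup f(H,k))$ and then reuse the known polynomial-time HD algorithm, rather than to build a fresh dynamic program. Your first two bullets take a different route and each contains a step that would fail. For the GHD bullet, recursing on $[\lambda_u]$-components while allowing an arbitrary bag $B_u \subseteq B(\lambda_u)$ is exactly the configuration that the special condition exists to control; without it you must argue why $[\lambda_u]$-components rather than $[B_u]$-components suffice, and your counting claim does not do this: under the BMIP with $c > 2$, \emph{pairwise} intersections of edges are not bounded by $d$ at all, so ``a union of at most $c-1$ pairwise intersections, each of size at most $d$'' does not bound $B_u \cap V(C_i)$, and the asserted polynomial bound on subproblems $(C,U)$ is left unproved --- establishing that bags can be restricted to a polynomial family of subedges is precisely the content of the cited lemma behind $f(H,k)$. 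For the FHD bullet, a fractional cover of weight $\leq k$ may have support containing far more than $k$ edges, so candidate bags are \emph{not} determined by $k$ ``anchor'' edges plus their pairwise intersections; exhibiting a polynomial candidate-bag family is the hard core of the cited FHD result (and the reason the paper itself warns that the constants are double exponential), and your sketch simply presupposes it.

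The VC-dimension bullet is essentially the cited argument: bounded VC-dimension bounds the integrality gap of the edge-cover LP at every bag via $\varepsilon$-nets, yielding $\ghw(H) = \mathcal{O}(\fhw(H)\cdot \log \fhw(H))$. The obstacle you single out is not one: the trace $E(H)|_{B_u}$ can never have larger VC-dimension than $H$, immediately from the definition of shattering. What is actually missing is the algorithmic half --- rounding ``inside a generic top-down search'' is unspecified, since enumerating candidate bags is again the hard part; the standard route is to use the $\varepsilon$-net bound purely existentially and combine it with $\ghw(H) \leq \hw(H) \leq 3\cdot\ghw(H)+1$ and the polynomial-time HD algorithm to output an HD (hence an FHD) of width $\mathcal{O}(\fhw \log \fhw)$. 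In summary: bullet three is acceptable in outline modulo that repair, but bullets one and two have genuine gaps at exactly the points where the cited proofs do their real work.
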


\section{GHD Algorithms}
\label{sec:ghd-algs}
In this section, we present two different ways to implement the tractable algorithm for computing GHDs in case of 
BIP~\cite{DBLP:conf/pods/FischlGP18}. We shall refer to these two implementations as \globalbip{} and \localbip{}.
Moreover, we shall present a completely new approach to computing GHDs, which will be referred to as \balsep{}.
These variants exploit low intersection size to compute a GHD of width $\leq k$ of a hypergraph in polynomial time.
While \globalbip{} and \localbip{} differ in the time of the computation of certain sets of subedges, the \balsep{} algorithm is a novel approach in computing decompositions by means of \emph{balanced separators}.

\subsection{Theoretical Background}
In~\cite{DBLP:conf/pods/FischlGP18}, the $\checkp{(\mathrm{GHD},k)}$ problem was proved to be intractable for 
$k \geq 2$, but the authors also identified tractable fragments corresponding to classes of hypergraphs having bounded (multi-)intersection size.
In the following we focus on the intersection size, i.e., the maximum intersection size of any two edges in a hypergraph and explain how this property can be used for tractable $\ghw$ computation.
Since the three variants presented later rely on the same central idea, we refer to the theoretical algorithm as the \emph{$ghw$-algorithm}.

For a given hypergraph $H = (V(H), E(H))$, the $\ghw$-algorithm adds a polynomial-time computable set $f(H,k)$ of subedges of edges in $E(H)$ to the hypergraph $H$.
The set $f(H,k)$ has the property that $\ghw(H) = k$ if and only if $\hw(H') = k$, where $H' = (V(H), E(H) \cup f(H,k))$.
Thus, it is possible to use $\checkp{(\mathrm{HD},k)}$ to solve $\checkp{(\mathrm{GHD},k)}$ and achieve tractability.
The particular set
\begin{equation}
f(H,k) = \bigcup_{e \in E(H)}\Bigg(
\bigcup_{e_1,\dots,e_j \in (E(H) \setminus \{e\}),\, j \leq k} 
2^{(e \cap (e_1 \cup \dots \cup e_j))}
\Bigg)
\label{eq:bip-subsets}
\end{equation}
contains, for each $e \in E(H)$, all subsets of intersections of $e$ with up to $k$ edges of $H$ different from $e$.
Although $f(H,k)$ could in general contain an exponential number of elements, for fixed $k$ and intersection size of $H$ bounded by $d$, the set $e \cap (e_1 \cup \dots \cup e_j)$ contains at most $d \cdot k$ elements and, therefore, $\lvert f(H,k) \rvert$ is polynomially bounded.

Since the $\ghw$-algorithm relies on the computation of HDs, we implemented a new version of the $\hw$-algorithm in~\cite{DBLP:journals/jea/GottlobS08} and called it \newdetkdecomp{}.
Our program contains also the implementations of all the algorithms in this section, but we defer a detailed discussion of the complete library to Section~\ref{sec:experiments}.

\subsection{The \globalbip{} Algorithm}
Given $H$ and $k \geq 1$, a straightforward implementation of the $\ghw$-algorithm consists in computing the set $f(H,k)$, creating the hypergraph $H' = (V(H), E(H) \cup f(H,k))$, and finally computing an HD of $H$ of width $\leq k$, if it exists.
Since here we compute the set $f(H,k)$ a priori \emph{globally} for the whole hypergraph, we call this algorithm \globalbip{}.

Algorithm~\ref{alg:globalbip} is a detailed description of \globalbip{}.
The input consists of a hypergraph $H$ and a constant $k$, while the output is a GHD of $H$ of width $\leq k$, if it exists, and \textsc{null} otherwise.
In line 2, we first compute $f(H,k)$ as in Equation~\ref{eq:bip-subsets} and then, in line 3, we create the hypergraph $H'$ which is obtained by adding the subedges in $f(H,k)$ to $H$.
In line 4, we call \newdetkdecomp on $H'$ and $k$ as a black box and store its output in the variable $\mathit{HD}$.
If $\mathit{HD}$ is \textsc{null}, a GHD of width $\leq k$ of $H$ does not exist, therefore we return \textsc{null}.
Otherwise, we need to ``fix" the decomposition as described in lines 6-10.
In particular, for each node $u$ of $\mathit{HD}$, and for each edge of $f(H,k)$ in $\lambda_u$, i.e., $e' \in (\lambda_u \cap f(H,k))$ that is not an edge in the original hypergraph $H$, we substitute $e$ with an edge $e' \in E(H)$ such that $e \subseteq e'$.
In this way we obtain a new edge cover $\lambda_u'$ such that $B_u \subseteq B(\lambda_u) \subseteq B(\lambda_u')$, but also $\lvert \lambda_u \rvert = \lvert \lambda_u' \rvert$.
Thus, the new decomposition still satisfies all the properties of a GHD and its width is still $\leq k$.
Eventually, in line 11, we return $\mathit{HD}$.

\begin{algorithm}
	\SetKwInOut{KwPar}{Parameter}
	\SetKwFunction{NewDetKDecomp}{NewDetKDecomp}
	
	\DontPrintSemicolon
	\KwIn{A hypergraph $H$.}
	\KwPar{An integer $k \geq 1$.}
	\KwOut{A GHD of $H$ of width $\leq k$ if it exists, \textsc{null} otherwise.}
	
	\Begin{
		$f(H,k) \leftarrow $ compute as in Equation~\ref{eq:bip-subsets}\;
		$H' \leftarrow (V(H), E(H) \cup f(H,k))$\;
		$\mathit{HD} \leftarrow$ \NewDetKDecomp{$H', k$}\;
		\If{$\mathit{HD} \neq$ \textsc{null}}{
			\ForEach{$u \in \mathit{HD}$}{
				\ForEach{$e \in (\lambda_u \cap f(H,k))$}{
					$e' \leftarrow e' \in E(H)$ such that $e \subseteq e'$\;
					$\lambda_u(e) \leftarrow 0$\;
					$\lambda_u(e') \leftarrow 1$\;
				}
			}
		}
		\Return{$\mathit{HD}$}
	}
	
	\caption{\globalbip{}}
	\label{alg:globalbip}
\end{algorithm}

\subsection{The \localbip{} Algorithm}
\label{sec:localbip}
The main drawback of \globalbip{} is that the size of $f(H,k)$, though polynomial, could be huge for practical purposes.
Therefore we looked at ways to reduce the number of edges to add to $H$ by restricting the computation only to those edges that might be actually necessary.
The approach we used follows from an observation about the role played by $f(H,k)$ in the tractability proof in~\cite{DBLP:conf/pods/FischlGP18}.

Recall that the proof uses $\checkp{(\mathrm{HD},k)}$ on the hypergraph $H'$ to answer $\checkp{(\mathrm{GHD},k)}$ for the hypergraph $H$.
To do this in a sound way, the set $f(H,k)$ has to contain all the edges that could be used to cover possible bags of $H'$ in an HD without changing the width.
Consider a top-down construction of a GHD of $H$.
At some point we might want to choose, for some node $u$, a bag $B_u$ such that $x \notin B_u$ for some variable $x \in B(\lambda_u) \cap V(T_u)$.
This choice would violate condition (4) of HDs and would not be allowed for the computation of an HD.
In particular, there is an edge $e$ with $x \in e$ and $\lambda_u(e) = 1$.
For this reason, the set $f(H,k)$ contains an edge $e'$ such that $e' \subset e$ and $x \notin e'$.
Hence, we can substitute $e$ with $e'$ in the cover $\lambda_u$ (i.e., $\lambda_u(e)=0,\, \lambda_u(e')=1$) to eliminate the violation of condition (4).
Moreover, because of the connectedness condition, there is no need to look at the intersection of $e$ with arbitrary edges in $E(H)$, instead we consider only the intersections of $e$ with unions of edges that may possibly occur in bags of $T_u$.
In other words, for each node $u$ of the decomposition, we consider only an appropriate subset $f_u(H,k) \subseteq f(H,k)$.
More specifically, for the current node $u$, let $H_u \subseteq H$ be the component we want to decompose.
Then, we define $f_u(H,k)$ as follows:
\begin{equation}
f_u(H,k) = \bigcup_{e \in E(H)}\Bigg(
\bigcup_{e_1,\dots,e_j \in (E(H_u) \setminus \{e\}),\, j \leq k} 
2^{(e \cap (e_1 \cup \dots \cup e_j))}
\Bigg)
\label{eq:bip-local-subsets}
\end{equation}

We call the resulting algorithm \localbip{} because the set of edges $f_u(H,k)$ is computed \emph{locally} for each node $u$ during the construction of the decomposition.
It follows \newdetkdecomp closely, but it differs in the search of the separators.
In particular, while decomposing $H$, the algorithm first tries all possible $\ell$-combinations of edges in $E(H)$ (with $\ell \leq k$) and only if the search does not succeed, it tries $\ell$-combinations of subedges in $f_u(H,k)$.

\subsection{The \balsep{} Algorithm}
\label{sec:balsep}
So far we have presented two adaptations of the theoretical $\ghw$-algorithm from~\cite{DBLP:conf/pods/FischlGP18}.
On the one hand, they extend \newdetkdecomp to compute GHDs and exploit bounded intersection size for tractability.
On the other hand, they do not introduce any significant algorithmic innovation.
In the following we describe a novel approach to compute GHDs that makes use of sets of edges called \emph{balanced separators}.
We first extend the terminology of Section~\ref{sec:preliminaries}, then give a detailed description of the algorithm, and finally prove that our algorithm is sound and complete.

\subsubsection{Balanced Separators and Special Edges}
Recall that a \emph{hypergraph\/} is a pair $ H = (V(H), E(H)) $, consisting of a set $ V(H) $ of \emph{vertices} and a 
set $ E(H) $ of hyperedges (or, simply \emph{edges}), which are non-empty subsets of $ V(H) $.
Since we assume that hypergraphs do not have isolated vertices, we can identify a hypergraph $H$ with its set of edges $E(H)$.
Then, a subhypergraph $H'$ of $H$ is a subset of (the edges of) $H$.

Starting off with a hypergraph $H$, the \balsep{} algorithm has to deal with subhypergraphs $H' \subseteq H$ 
augmented by a set $S_p$ of special edges. A special edge is simply a set of vertices from $H$. Intuitively, special edges correspond to bags $B_u$ in a GHD of $H$.
Thus, an {\em extended subhypergraph\/} of $H$ is of the form  
$H' \cup S_p$,
where $H' \subseteq H$ is a subhypergraph and $S_p$ is a set of special edges.

We now extend three crucial definitions from hypergraphs to extended subhypergraphs, namely components, 
balanced separators and GHDs. 
We recall that, even though a separator is a set of vertices, it can be defined as a set of edges.
Then, for $S \subseteq E(H)$, an $[S]$-component  is a $[W]$-component with $W = \bigcup_{e \in S} e$.
In our algorithm we use the fact that, because of the edges in $f(H,K)$, 
it is always possible to choose a separator $\lambda_u$ such that $B(\lambda_u) = B_u$~\cite{DBLP:conf/pods/FischlGP18}.
Hence, there will not be any need to distinguish between vertex and edge separators.
We start with components.

\begin{myDefinition}[components of extended subhypergraphs]
	For a set $U \subseteq V(H)$ of vertices, we define $[U]$-components of an extended subhypergraph $H' \cup S_p$ of $H$ as follows:
	\begin{itemize}
		\item We define {\em $[U]$-adjacency\/} as a binary relation on $H' \cup S_p$ as follows: 
		two (possibly special) edges $f_1,f_2 \in H' \cup S_p$ are {\em $[U]$-adjacent\/}, if 
		$(f_1 \cap f_2) \setminus U \neq \emptyset$ holds.
		\item We define {\em $[U]$-connectedness\/} as the transitive closure of the {\em $[U]$-adjacency\/} relation.
		\item A {\em $[U]$-component\/} of $H' \cup S_p$ is a maximally
		$[U]$-connected subset $C \subseteq H' \cup S_p$ .
	\end{itemize}
\end{myDefinition}

Hence, if $C_1, \dots, C_\ell$ are the $[U]$-components of $H' \cup S_p$, then 
$H' \cup S_p$ is partitioned into $C_0 \cup C_1 \cup  \dots \cup  C_\ell$, such that 
$C_0 = \{ f \in H' \cup S_p \mid f \subseteq U \}$.

We next define balanced separators. While we give a definition w.r.t. sets of vertices $U \subseteq V(H)$,
they can be alternatively defined in terms of sets of edges $S \subseteq E(H)$.

\begin{myDefinition}[balanced separators]
	Let $H' \cup S_p$ be an extended subhypergraph of a hypergraph $H$
	and let  $U \subseteq V(H)$ be a set of vertices of $H$. 
	The set $U$ is a \emph{balanced separator} of $H' \cup S_p$ 
	if for each $[U]$-component $C_i$ of $H' \cup S_p$, 
	$\lvert C_i \rvert  \leq \frac{\lvert H' \cup S_p \rvert}{2} $
	holds.  In other words, no $[U]$-component must contain more than half the edges of $H' \cup S_p$. 
\end{myDefinition}

Finally, we extend GHDs to extended subhypergraphs.

\begin{myDefinition}[GHDs of extended subhypergraphs]
	\label{def:ext-ghd}
	Let $H$ be a hypergraph and $H' \cup S_p$ an extended subhypergraph of $H$. 
	A GHD of $H' \cup S_p$ is a tuple $\langle T, (B_u)_{u \in T}, (\lambda_u)_{u \in T} \rangle$, 
	where $T = (N(T),E(T)) $ is a tree, and $B_u$ and $\lambda_u$ are labeling functions, 
	which map to each node $u \in T$ two sets, 
	$B_u \subseteq V(H)$ and $\lambda_u \subseteq E(H) \cup S_p$. 
	For a node $u$, we call $B_u$ the \emph{bag}
	and $\lambda_u$ the \emph{edge cover} of $u$. The set 
	$B(\lambda_u)$ of vertices ``covered'' by $\lambda_u$ is defined as 
	$B(\lambda_u) = \{v \in V(H) \mid v \in f, f \in \lambda_u \}$.
	The functions $\lambda_u$ and $B_u$  have to satisfy the following conditions: 
	\begin{enumerate}
		\item For each node $u \in T$,  either \\
		a) $\lambda_u \subseteq E(H)$ and $B_u \subseteq B(\lambda_u)$, or \\
		b) $\lambda_u = \{s\}$ for some $s \in S_p$ and $B_u  = s$.
		
		\item If, for some $u \in T$,  $\lambda_u = \{s\}$ for some $s \in S_p$, then $u$ is a leaf node.
		
		\item For each $e \in H' \cup S_p$, there is a node $u \in T$ s.t. $e \subseteq B_u$.
		
		\item For each vertex $v \in V(H)$, $\{ u \in T \mid  v \in B_u  \}$ is a connected subtree of $T$. 
	\end{enumerate}
	The \emph{width of a GHD} is defined as $\max\{\lvert \lambda_u\rvert \colon u \in T \}$.
\end{myDefinition}

Clearly, also $H$ itself is an extended subhypergraph of $H$ with $H' = H$ and $S_p = \emptyset$. 
It is readily verified that the above definition of GHD of an extended subhypergraph $H' \cup S_p$ and the definition of 
GHD of a hypergraph $H$ coincide for the special case of taking $H$ as an extended subhypergraph of itself.

In \cite{DBLP:journals/jcss/GottlobLS02}, a normal form of hypertree decompositions was introduced. We will carry 
the notion of normal form over to GHDs of extended subhypergraph. To this end, it is convenient to
first define the set of edges {\em exclusively\/} covered by some subtree of a GHD:

\begin{myDefinition}
	\label{def:excov}
	Let $H' \cup S_p$ be an extended subhypergraph of some hypergraph $H$ and 
	$\mathcal{D} = \langle T, (B_u)_{u \in T}, (\lambda_u)_{u \in T} \rangle$ a GHD for $H' \cup S_p$. 
	For a node $u \in T$, we write $T_u$ to denote the subtree of $T$ rooted at $u$. 
	Moreover, we define the set of edges exclusively covered by $T_{u}$
	as  $\mathit{exCov}(T_{u}) = \{ f \in H'\cup S_p \mid \exists v \in T_{u} : f \subseteq B_v \wedge f \nsubseteq B_u \}$.
\end{myDefinition}

Our normal form of GHDs is then defined as follows:

\begin{myDefinition}[GHD of extended subhypergraphs normal form]
	We say that 
	a GHD $\langle T, (B_u)_{u \in T}, (\lambda_u)_{u \in T} \rangle$ 
	of an extended subhypergraph  $H' \cup S_p$  is in normal form, if for the root node $r$ of $T$, 
	the following property holds: let  $u_1, \dots, u_\ell$ be the child nodes of 
	$r$ in $T$ and let $T_{u_1}, \dots, T_{u_\ell}$ denote the subtrees in $T$ rooted at 
	$u_1, \dots, u_\ell$, respectively. Then 
	$\mathit{exCov}(T_{u_1}), \dots, \mathit{exCov}(T_{u_\ell})$ are precisely the $[V(\lambda_r)]$-components of $H' \cup S_p$.
	Intuitively, each subtree $T_i$ below the root ``covers'' the edges of {\em precisely one\/}
	$[B_r]$-component of $H' \cup S_p$.
\end{myDefinition}

The following lemma is an immediate extensions of previous results for hypergraphs
to extended subhypergraphs.

\begin{myLemma}
	\label{lm:lemma1}
	Let $H' \cup S_p$ be an extended subhypergraph of some hypergraph $H$ and suppose
	that there exists a GHD $\mathcal{D}$ of width $\leq k$ for $H' \cup S_p$. Then there 
	also exists a GHD $\mathcal{D}'$ in normal form of width $\leq k$ for $H' \cup S_p$, such that 
	$B_r$ is a balanced separator of $H' \cup S_p$ for the root node $r$ of~$\mathcal{D}'$. 
\end{myLemma}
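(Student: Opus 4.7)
The statement extends two classical facts about GHDs of hypergraphs to the extended-subhypergraph setting: the balanced-separator result of Adler--Gottlob--Grohe~\cite{DBLP:journals/ejc/AdlerGG07} and the normal-form theorem of Gottlob--Leone--Scarcello~\cite{DBLP:journals/jcss/GottlobLS02}. My plan is first to locate a node $u^*\in T$ in the given GHD $\mathcal{D}$ whose bag is a balanced separator of $H'\cup S_p$, then to re-root $\mathcal{D}$ at $u^*$, and finally to restructure its children to obtain the normal form without changing the root or its labels.

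For the balanced-separator step I would apply a standard tree-centroid argument to $T$. Each edge $e\in H'\cup S_p$ determines a non-empty connected subtree $S_e=\{u\in T\mid e\subseteq B_u\}$, where non-emptiness follows from condition~(3) and connectedness from condition~(4) of Definition~\ref{def:ext-ghd}. A centroid argument on $T$ (weighting each node $u$ by the number of edges $e$ with $u\in S_e$) yields a node $u^*$ such that, for every neighbour $v$ of $u^*$, at most $\lvert H'\cup S_p\rvert/2$ edges $e$ have $S_e$ lying on the $v$-side of $u^*$. To see that $B_{u^*}$ is then a balanced separator, suppose edges $e_1,e_2$ are $[B_{u^*}]$-adjacent via some vertex $x\notin B_{u^*}$. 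By condition~(4) the set $\{u\mid x\in B_u\}$ is a connected subtree of $T$ avoiding $u^*$, so it lies entirely in one direction from $u^*$; since this subtree meets both $S_{e_1}$ and $S_{e_2}$, those subtrees sit in the same direction from $u^*$. Propagating this along $[B_{u^*}]$-adjacency chains shows every $[B_{u^*}]$-component of $H'\cup S_p$ is confined to a single direction and thus has size at most $\lvert H'\cup S_p\rvert/2$.

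Re-rooting $\mathcal{D}$ at $u^*$ then yields a GHD $\mathcal{D}''$ of $H'\cup S_p$ of width $\leq k$ whose root $r=u^*$ has the required balanced-separator bag: all conditions of Definition~\ref{def:ext-ghd} are root-independent (in particular, unlike the special condition of HDs, conditions (1), (3), and (4) constrain only bags and covers, while the leaf-constraint~(2) concerns the graph-theoretic notion of leaf). Finally I would apply the standard top-down normal-form construction of~\cite{DBLP:journals/jcss/GottlobLS02} to $\mathcal{D}''$: compute the $[V(\lambda_r)]$-components $C_1,\dots,C_\ell$ of $H'\cup S_p$; observe that by connectedness each subtree $T_{u_i}$ rooted at a child of $r$ satisfies $\mathit{exCov}(T_{u_i})\subseteq C_{j(i)}$ for a unique $j(i)$; then, for each $j$, group all children $u_i$ assigned to $C_j$ under a single new intermediate node whose bag and cover duplicate those of $r$, and recurse inside the resulting subtrees. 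Since no $\lambda$-label is altered, the width stays $\leq k$, and since every helper inherits $B_r$, each vertex's occurrence set remains a connected subtree.

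The main obstacle is this normal-form step, where one must take care to preserve condition~(2) (leaf-only use of special-edge covers) and the connectedness condition during regrouping. The delicate case is when several children of $r$ exclusively cover edges of the same $[V(\lambda_r)]$-component; the inserted helper nodes must therefore be assigned $\lambda_r$, which is safe because $r$ itself is not a leaf in $\mathcal{D}''$, so by~(2) $\lambda_r$ is never a single special edge. Any child $u_i$ whose cover $\lambda_{u_i}=\{s\}$ with $s\in S_p$ is already a leaf in $\mathcal{D}''$ by~(2), so reattaching it below a helper preserves this invariant. The restructuring therefore terminates with the desired normal-form GHD $\mathcal{D}'$ of width $\leq k$ whose root bag $B_r=B_{u^*}$ is a balanced separator of $H'\cup S_p$.
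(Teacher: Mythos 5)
Your overall route is the same as the paper's: locate a node of the given GHD whose bag is a balanced separator of $H'\cup S_p$, re-root there, and then apply the normal-form transformation of \cite{DBLP:journals/jcss/GottlobLS02} adapted to extended subhypergraphs. The paper obtains the balanced-separator node by appeal to \cite{DBLP:journals/ejc/AdlerGG07} (or by an iterative descent into heavy subtrees), whereas you prove it directly by a centroid argument on the occurrence subtrees $S_e$; that argument is correct, self-contained, and a perfectly good replacement for the citation, including the key observation that every edge of a $[B_{u^*}]$-component has its subtree $S_e$ confined to one side of $u^*$.

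There is, however, one step whose justification is circular, and it sits exactly at the point where the special edges make the extended setting delicate. Nothing in your construction prevents the centroid $u^*$ from being a degree-one node of $\mathcal{D}$ with $\lambda_{u^*}=\{s\}$ for some $s\in S_p$ (this happens, e.g., when more than half of the edges of $H'\cup S_p$ are contained in one large special edge $s$ that is covered only at that leaf). After re-rooting, this node acquires a child, so condition~(2) of Definition~\ref{def:ext-ghd} is violated at the new root; your argument that ``$r$ is not a leaf in $\mathcal{D}''$, so by (2) $\lambda_r$ is never a single special edge'' presupposes that $\mathcal{D}''$ already satisfies condition~(2) at $r$, which is precisely what is in question in this case. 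Closing the gap needs an extra argument, for instance that the special edge $s$ can be re-covered at the root by at most $k$ ordinary edges of $E(H)$ --- which does hold for the extended subhypergraphs actually produced in the recursion of Algorithm~\ref{alg:balsep}, since every special edge there arises as $B(\lambda)$ for a separator $\lambda$ of size $\leq k$ --- or an argument that a different balanced-separator node can be chosen; for completely arbitrary $S_p$ the case cannot simply be defined away, because $s$ need not be coverable by $k$ ordinary edges at all. To be fair, the paper's own proof, which likewise re-roots at the node it finds, passes over the same point in silence, so this is a shared lacuna rather than a defect specific to your write-up.
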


\begin{proof}[Proof]
	The lemma combines two results from \cite{DBLP:journals/jcss/GottlobLS02} and \cite{DBLP:journals/ejc/AdlerGG07}, respectively.
	
	Our normal form relaxes the normal form of HDs introduced in Definition 5.1 in \cite{DBLP:journals/jcss/GottlobLS02}. The transformation
	into normal form can be taken over almost literally from the proof of Theorem 5.4 in \cite{DBLP:journals/jcss/GottlobLS02} for establishing the 
	normal form of HDs.
	
	The existence of a balanced separator as the root of a GHD is implicit in the definition of ``hyperlinkedness'' and Theorem 19 in 
	\cite{DBLP:journals/ejc/AdlerGG07}.  Again, it can be easily taken over to our case of an extended subhypergraph (i.e., to 
	take also special edges into account).
	Actually, it can also be easily proved directly by starting off at the root $r$ of an arbitrary GHD 
	of $H' \cup S_p$ in normal form and checking if the components covered by the subtrees below all have size at most 
	$\frac{\lvert H' \cup S_p \rvert}{2}$. If so, we already have the desired form. If not, there must be one subtree $T_{r'}$ 
	rooted at a child $r'$ of $r$, such that 
	$\mathit{exCov}(T_{r'})$
	is greater than $\frac{\lvert H' \cup S_p \rvert}{2}$. We then apply the normal form transformation 
	also to $T_{r'}$ and check recursively if all the components covered by the subtrees below all have size at most 
	$\frac{\lvert H' \cup S_p \rvert}{2} $. By repeating this recursive step, we will eventually reach a node $u$, such that
	$B_u$ is a balanced separator. Then we simply take this node as the root and again apply the normal form transformation
	of the proof of Theorem 5.4 in \cite{DBLP:journals/jcss/GottlobLS02} to this new root node and the subtrees immediately below it.
\end{proof}

\subsubsection{Algorithm Description}
Here we describe Algorithm~\ref{alg:balsep}, which we call \balsep.
For a fixed integer $k \geq 1$, it takes as input a hypergraph $H$ and computes a GHD of $H$ of width $\leq k$ if it exists, or returns \textsc{null} otherwise.
The main of Algorithm~\ref{alg:balsep} consists of a call to the Function~Decompose with parameters $H$ and an empty set of special edges.

The recursive Function~Decompose constitutes the core of the algorithm and, given as input a hypergraph $H'$ and a set of 
special edges $S_p$, computes a GHD of $H' \cup S_p$ of width $\leq k$ if it exists.
Lines~5-12 deal with the two base cases of the algorithm.
If $H' \cup S_p$ has only one edge, we create a decomposition made of a single node $u$ whose label $\lambda_u$ contains the only edge of $H' \cup S_p$ and the bag $B_u = V(H' \cup S_p)$, i.e., all the vertices of the extended subhypergraph, which are also the ones covered by $\lambda_u$.
In a similar way, we deal with the case $\lvert E(H' \cup S_p) \rvert = 2$.
We simply create two nodes $u,v$, one for each edge of $H' \cup S_p$, and then attach $v$ as a child of $u$ and return the decomposition.

If the extended subhypergraph has at least 3 edges, we have to decompose it until we reach one of the two base cases.
In line~13, we initialize the object $\mathit{BalSepIt}$, which is an iterator over the balanced separators of size $\leq k$ of $H' \cup S_p$ with edges in $H$.
The iterator $\mathit{BalSepIt}$ produces, one by one, all the $\ell$-combinations of edges in $H$, for each $\ell \leq k$, to find a balanced separator for $H' \cup S_p$.
Moreover, if all the combinations of full edges fail, the function uses subedges of $H$ to generate separators corresponding to elements of the set $f(H,k)$ of Equation~\ref{eq:bip-subsets}.

In the while loop in lines~14-27, we recursively decompose $H' \cup S_p$.
We are now creating the current node $u$ of the GHD and we have to compute $\lambda_u$, $B_u$ and the children of $u$.
In line~15 we set $\lambda_u$ as the next balanced separator and, in line~16, we fix the bag $B_u = B(\lambda_u)$ as discussed above.
We want to compute a GHD for each $[B_u]$-component of $H' \cup S_p$, in such a way that it will be possible to attach each of them to the current node $u$ without violating any condition of Definition~\ref{def:ext-ghd}.

Function~\ref{alg:compsubh} computes the set of extended subhypergraphs corresponding to $[B_u]$-components of $H' \cup S_p$ and introduces, in each of them, a new special edge $B_u$ for connectedness.
We assume here the existence of a function ConnectedComponents which computes the connected components of a hypergraph in a standard way.
First, Function~\ref{alg:compsubh} computes the hypergraph $H_u = (V_u, E_u)$ resulting from the removal of all vertices in $B_u$ 
from $H' \cup S_p$.
Then, in lines~5-11, it creates a new subhypergraph of $H' \cup S_p$ for each connected component of $H_u$.
The new subhypergraph is stored in a variable $c$, which is a pair consisting of a hypergraph $H$ and a set of special edges $S_p$.
For a single subhypergraph, the new set of special edges $c.S_p$ is composed of the edges of $S_p$ intersecting the current component $\mathit{comp}$ plus a new special edge $s = B_u$ corresponding to the separator $B_u$.
We can then compute the hypergraph $c.H$.
Its set of edges $E$ contains the edges of $H'$ intersecting the current component $\mathit{comp}$ and its set of vertices is the union of $E$ and $c.S_p$.
We then add $c$ to the set of results $\mathit{res}$, which we finally return in line~12.

Back in Function~Decompose, in lines~18-24, we recursively compute a GHD for each extended subhypergraph returned by Function~\ref{alg:compsubh}.
If the decomposition $\mathcal{D}$ returned in line~19 is not \textsc{null}, we add it to the set $\mathit{subDecomps}$ of the children of the current node $u$, otherwise, we set $\mathit{subDecomps}$ to \textsc{null} and break the loop.
At the end of the loop, we check whether $\mathit{subDecomps}$ is \textsc{null}.
If this is the case, it means that one of the recursive calls of Function~Decompose was unsuccessful.
We then have to continue the while loop of lines~14-27 and try the next balanced separator.
In case all the recursive calls of Function~Decompose were successful, Function~\ref{alg:buildghd} builds the resulting GHD and returns it in line~27.
If the algorithm exhausts all the choices of balanced separators for $H' \cup S_p$ and exits the while loop of lines~14-27, it means that it is impossible to create a GHD of width $\leq k$ for $H' \cup S_p$ and we return \textsc{null} in line~28.

Finally, we describe the Function~\ref{alg:buildghd} that, given a bag $B_u$, an edge cover $\lambda_u$, and a set $\mathit{children}$ of GHDs, returns a GHD with root $u$ and children $\mathit{children}$.
We start off creating the node $u$ with labels $B_u$ and $\lambda_u$.
Then, for each child $\mathcal{D} \in \mathit{children}$, we find the node $r$ in $T$ (the tree of $\mathcal{D}$) having $B_r = B_u$ and reroot $T$ to $r$.
Now, for each child $c_r$ of $r$, we attach $c_r$ as a child of $u$.
In other words, we attach every subtree rooted at a child node of $r$ to $u$.
Finally, we return the resulting decomposition.

\begin{algorithm}
	\SetKwInOut{KwPar}{Parameter}
	\SetKwProg{Main}{Main}{}{}
	\SetKwProg{Fn}{Function}{}{}
	\SetKwFunction{Decompose}{Decompose}
	\SetKwFunction{EnumBalSeps}{EnumBalSeps}
	\SetKwFunction{CompSubHyp}{ComputeSubhypergraphs}
	\SetKwFunction{BuildGHD}{BuildGHD}
	\SetKwFunction{Attach}{AttachChild}
	\SetKwFunction{Init}{InitBalSepIterator}
	\SetKwFunction{HasNext}{HasNext}
	\SetKwFunction{Next}{Next}
	\SetKw{Break}{break}
	\SetKw{Continue}{continue}
	
	\DontPrintSemicolon
	
	\KwIn{A hypergraph $H$.}
	\KwPar{An integer $k \geq 1$.}
	\KwOut{A GHD of $H$ of width $\leq k$ if it exists, \textsc{null} otherwise.}
	
	\Main{}{
		Make $H$ globally visible\;
		\Return \Decompose{$H, \emptyset$}\;
	}
	
	\Fn{\Decompose{$H'$: hypergraph, $S_p$: set of special edges}}{
		\If{$\lvert E(H' \cup S_p) \rvert == 1$}{
			\Return node $u$ with $B_u \leftarrow V(H' \cup S_p)$ and $\lambda_u \leftarrow E(H' \cup S_p)$\;
		}
		\If{$\lvert E(H' \cup S_p) \rvert == 2$}{
			Let $e_1, e_2$ be the two edges of $H' \cup S_p$\;
			Create node $u$ with $B_u \leftarrow e_1$ and $\lambda_u \leftarrow \{e_1\}$\;
			Create node $v$ with $B_v \leftarrow e_2$ and $\lambda_v \leftarrow \{e_2\}$\;
			\Attach{$u,v$}\;
			\Return $u$;
		}
		$\mathit{BalSepIt} \leftarrow$ \Init{$H, H', S_p, k$}\;
		\While{\HasNext{$\mathit{BalSepIt}$}}{
			$\lambda_u \leftarrow$ \Next{$\mathit{BalSepIt}$}\;
			$B_u \leftarrow B(\lambda_u)$\;
			$\mathit{subDecomps} \leftarrow \{\}$\;
			\ForEach{$c \in$ \CompSubHyp{$H', S_p, B_u$}}{
				$\mathcal{D} \leftarrow$ \Decompose{$c.H, c.S_p$}\;
				\uIf{$\mathcal{D} \neq$ \textsc{null}}{
					$\mathit{subDecomps} \leftarrow \mathit{subDecomps} \cup \{\mathcal{D}\}$\;
				}\Else{
					$\mathit{subDecomps} \leftarrow$ \textsc{null}\;
					\Break\;
				} 
			}
			\If{$\mathit{subDecomps} ==$ \textsc{null}}{
				\Continue\;
			}
			\Return \BuildGHD{$B_u, \lambda_u, \mathit{subDecomps}$}\;
		}
		\Return \textsc{null}\;
	}
	
	\caption{\balsep{}}
	\label{alg:balsep}
\end{algorithm}

\begin{function}
	\SetKwFunction{Reroot}{Reroot}
	\SetKwFunction{Children}{Children}
	\SetKwFunction{Attach}{AttachChild}
	
	\DontPrintSemicolon
	\KwIn{A set of vertices $B_u$, a set of $\lambda_u$, a set of GHDs $\mathit{children}$.}
	\KwOut{A GHD with root $u$ and children $\mathit{children}$.}
	
	\Begin{
		Create node $u$ with $B_u$ and $\lambda_u$\;
		\ForEach{$\mathcal{D} \in \mathit{children}$}{
			Let $T$ be the tree structure of $\mathcal{D}$\;
			$\hat{r} \leftarrow$ \Reroot{$T, B_u$}\;
			\ForEach{$c_{\hat{r}} \in $ \Children{$\hat{r}$}}{
				\Attach{$u, c_{\hat{r}}$}
			}
		}
		\Return $u$\;
	}
	
	\caption{BuildGHD($B_u, \lambda_u, \mathit{children}$)}
	\label{alg:buildghd}
\end{function}

\begin{function}
	\SetKwFunction{ConnComp}{ConnectedComponents}
	
	\DontPrintSemicolon
	\KwIn{A hypergraph $H'$, a set of special edges $S_p$, a set of vertices $B_u$.}
	\KwOut{The set of subhypergraphs of $H' \cup S_p$ w.r.t. $B_u$.}
	
	\Begin{
		$V_u \leftarrow V(H') \setminus B_u$\;
		$E_u \leftarrow \{ e \cap V_u \mid e \in E(H' \cup S_p) \}$\;
		$\mathit{res} \leftarrow \{\}$\;
		\ForEach{$\mathit{comp} \in$ \ConnComp{$V_u, E_u$}}{
			$c \leftarrow$ initialize pair $(H=\textsc{null}, S_p=\textsc{null})$\;
			$c.S_p \leftarrow \{ s \in S_p \mid s \cap \mathit{comp} \neq \emptyset \} \cup \{B_u\}$\;
			$E \leftarrow \{ e \mid e \in E(H') \land e \cap \mathit{comp} \neq \emptyset \}$\;
			$V  \leftarrow V(E) \cup V(c.S_p)$\;
			$c.H \leftarrow (V, E)$\;
			$\mathit{res} \leftarrow \mathit{res} \cup \{ c \}$\;
		}
		\Return $\mathit{res}$\;
	}
	
	\caption{ComputeSubhypergraphs($H', S_p, B_u$)}
	\label{alg:compsubh}
\end{function}

\subsubsection{Soundness and Completeness}
Here we prove that Algorithm~\ref{alg:balsep} is sound and complete.
\begin{myTheorem}
	Let $H$ be a hypergraph and $k \geq 1$ an integer. Algorithm~\ref{alg:balsep} called on $H$ with parameter $k$ returns a GHD of $H$ of width $\leq k$ if and only if $\ghw(H) \leq k$.
\end{myTheorem}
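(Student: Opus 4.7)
The plan is to prove a stronger inductive claim about the recursive subroutine: for every extended subhypergraph $H' \cup S_p$ of $H$, the call \textsc{Decompose}$(H', S_p)$ returns a GHD of $H' \cup S_p$ of width $\leq k$ if and only if such a GHD exists. The theorem then follows at once by specializing to the outermost call \textsc{Decompose}$(H, \emptyset)$, using the observation in Definition~\ref{def:ext-ghd} that a GHD of the extended subhypergraph $H \cup \emptyset$ coincides with a GHD of $H$. I would induct on $n = \lvert E(H' \cup S_p) \rvert$. The base cases $n = 1, 2$ are handled explicitly by lines~5--12 of Algorithm~\ref{alg:balsep}, and one simply checks by inspection that the single- or two-node trees produced satisfy all four conditions of Definition~\ref{def:ext-ghd}; conversely, for $n \leq 2$ every hypergraph trivially has $\ghw \leq 1 \leq k$, so the algorithm never returns \textsc{null}.

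For soundness in the inductive step, suppose \textsc{Decompose} returns a non-\textsc{null} value on an input of size $n \geq 3$. Then the iterator produced some $\lambda_u \subseteq E(H) \cup f(H,k)$ with $\lvert \lambda_u \rvert \leq k$ such that, for every extended subhypergraph $c$ returned by \textsc{ComputeSubhypergraphs}, the recursive call succeeded and returned (by the induction hypothesis) a valid width-$\leq k$ GHD $\mathcal{D}_c$ of $c.H \cup c.S_p$. I would then verify each clause of Definition~\ref{def:ext-ghd} for the tree built by \textsc{BuildGHD}: the width bound is immediate from $\lvert \lambda_u \rvert \leq k$ and the inductive widths; edge coverage holds because every edge of $H' \cup S_p$ either lies inside $B_u$ (hence is covered at $u$) or belongs to a unique $[B_u]$-component and is therefore covered somewhere in the corresponding $\mathcal{D}_c$; the connectedness condition follows from the key fact that each $c.S_p$ contains the special edge $B_u$, which forces each $\mathcal{D}_c$ to possess a bag equal to $B_u$ (this is exactly the node $\hat{r}$ that \textsc{BuildGHD} reroots to and merges with $u$), so variables in $B_u$ remain connected across the merged tree, while variables outside $B_u$ remain confined to the single component they came from.

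For completeness in the inductive step, assume $H' \cup S_p$ admits a GHD of width $\leq k$. By Lemma~\ref{lm:lemma1}, there exists a GHD $\mathcal{D}^{\star}$ of $H' \cup S_p$ in normal form of width $\leq k$ whose root bag $B_r$ is a balanced separator; by the tractability argument from~\cite{DBLP:conf/pods/FischlGP18} recalled in Section~\ref{sec:ghd-algs}, we may further assume $\lambda_r \subseteq E(H) \cup f(H,k)$ with $B(\lambda_r) = B_r$. The normal form guarantees a bijection between the children's subtrees and the $[B_r]$-components of $H' \cup S_p$; each such subtree, viewed as a GHD, is a valid width-$\leq k$ GHD of the corresponding extended subhypergraph produced by \textsc{ComputeSubhypergraphs} (the added special edge $B_r$ is already covered at the top of that subtree by the connectedness condition of $\mathcal{D}^{\star}$). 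Since the iterator $\mathit{BalSepIt}$ enumerates all candidate covers in $E(H) \cup f(H,k)$ of size $\leq k$, it eventually reaches $\lambda_r$ (or some other balanced separator that works), and by the induction hypothesis every recursive call on the resulting components succeeds; hence \textsc{Decompose} returns a non-\textsc{null} GHD.

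The main obstacle I expect is the bookkeeping around the special edges and the rerooting performed by \textsc{BuildGHD}. Concretely, one must argue carefully that inserting $B_u$ as a special edge in each component's input forces every valid sub-GHD to carry a bag equal to $B_u$, that such a bag can be moved to the root without destroying connectedness (this is a routine property of TDs), and that the resulting ``merge'' with the parent node $u$ preserves connectedness for all variables of $H$ — in particular for variables of $B_u$ shared across several components. A secondary subtlety is justifying that restricting balanced separators to elements of $E(H) \cup f(H,k)$ does not lose any decomposition: this is precisely the content of the characterization from~\cite{DBLP:conf/pods/FischlGP18} that underlies the \globalbip{} and \localbip{} algorithms and can be invoked as a black box here.
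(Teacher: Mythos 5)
Your proposal follows essentially the same route as the paper: the same strengthened claim about Function~Decompose on extended subhypergraphs, induction on $\lvert E(H' \cup S_p) \rvert$, Lemma~\ref{lm:lemma1} for the completeness direction, and the same soundness argument that merges the recursively obtained GHDs at the node carrying the special edge $B_u$. The only slip is the parenthetical justification in your completeness step: the connectedness condition of $\mathcal{D}^{\star}$ does not guarantee that a child subtree contains a bag covering all of $B_r$, so the width-$\leq k$ GHD witnessing decomposability of $C_i \cup \{B_u\}$ should be taken (as the paper does) to be the subtree $T_i$ together with the root node $r$ itself, after which your argument goes through unchanged.
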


We prove the soundness and the completeness of Algorithm~\ref{alg:balsep} separately.
Nevertheless, we want to point out that the main procedure of the algorithm consists of a call to Function~Decompose with input $H$ and an empty set of special edges.
Thus, in the next proofs, we will actually prove that Function~Decompose called on $(H',S_p)$ with parameter $k$ returns a GHD of $H' \cup S_p$ of width $\leq k$ if and only if $\ghw(H' \cup S_p) \leq k$ w.r.t. Definition~\ref{def:ext-ghd}.
Note that in case of a hypergraph $H$ and a set of special edges $S_p = \emptyset$, Definition~\ref{def:ext-ghd} coincides with the usual definition of GHD.

\begin{proof}(Soundness)
	We show that if Function~Decompose called on $(H',S_p)$ with parameter $k$ returns a GHD of $H' \cup S_p$ of width $\leq k$, then such a decomposition actually exists and $\ghw(H' \cup S_p) \leq k$.
	We proceed by induction over the size of $H' \cup S_p$, i.e., $\lvert E(H' \cup S_p) \rvert$.
	For the base case, we assume $\lvert H' \cup S_p \rvert \leq 2$.
	In case $\lvert H' \cup S_p \rvert = 1$, we return a GHD made of a single node whose $\lambda$-label consists of the only edge $H' \cup S_p$, which also cover all of the vertices of the hypergraph.
	Such a decomposition has width 1 and clearly satisfies all the conditions of Definition~\ref{def:ext-ghd}.
	In case $\lvert H' \cup S_p \rvert = 2$, we create two nodes $u,v$, each one corresponding to an edge of $H' \cup S_p$ and we attach $v$ as a child of $u$.
	Note that both $u$ and $v$ are leaf nodes. It is easy to verify that also in this case we return a valid GHD of $H' \cup S_p$.
	
	For the induction step, suppose that the recursive function Decompose correctly returns a GHD of $H' \cup S_p$ of
	width $\leq k$ for each $H' \cup S_p$ such that $\lvert H' \cup S_p \rvert \leq j$, for some $j \geq 2$.
	Now suppose that  $\lvert H' \cup S_p \rvert = j + 1 \geq 3$ and that Function~Decompose($H',S_p$) returns a GHD of $H' \cup S_p$ of width $\leq k$.
	We have to show that then there indeed exists such a GHD.
	
	Algorithm~\ref{alg:balsep} only returns a GHD in line~27.
	The program successfully reaches this line only if the following happens:
	\begin{itemize}
		\item In line~15, a balanced separator $\lambda_u$ of $H' \cup S_p$ is chosen.
		\item In line~18, the extended subhypergraphs of $H' \cup S_p$ w.r.t. $B_u = B(\lambda_u)$ are computed;
		in particular, each extended subhypergraph corresponds to a $[B_u]$-component of $H' \cup S_p$ plus a new special edge $s = B_u$.
		\item For each extended subhypergraph $c.H \cup c.S_p$, the call to Function~Decompose in line~19 is successful, i.e., it returns a GHD of $c.H \cup c.S_p$ of width $\leq k$.
	\end{itemize}
	
	We are assuming $\lvert H' \cup S_p \rvert \geq 3$ and $\lambda_u$ is a balanced separator of $H' \cup S_p$.
	Let $C_1, \dots, C_\ell$ be the $\ell$ $[B_u]$-components of $H' \cup S_p$.
	All extended subhypergraphs $C_i \cup \{B_u\}$ are strictly smaller than $j$.
	Hence, by the induction hypothesis, for each $i \in \{1,\dots,\ell\}$, there indeed exists a GHD $\mathcal{D}_i = \langle T_i, (B_{i,u})_{u \in T}, (\lambda_{i,u})_{u \in T} \rangle$ of width $\leq k$ for the extended subhypergraph $C_i \cup \{B_u\}$.
	
	It is left to show that Function~\ref{alg:buildghd} correctly constructs a GHD of width $\leq k$ of $H' \cup S_p$.
	For each $\mathcal{D}_i$, let $\hat{r}_i$ be the node of $T_i$ with $B_{i,\hat{r}_i} = B_u$ and $\lambda_{i,\hat{r}_i} = \{B_u\}$.
	By construction the node $\hat{r}_i$ exists in every $\mathcal{D}_i$, it is always a leaf and it has the same bag and $\lambda$-label everywhere.
	Let $T_i = (N(T_i), E(T_i))$ be the tree structure of $\mathcal{D}_i$ and, w.l.o.g., assume that the node sets $N(T_i)$ are pairwise disjoint.
	We define the tree structure $T = (N(T),E(T))$ and the functions $B_u$ and $\lambda_u$ of $\mathcal{D}$ as follows:
	\begin{itemize}
		\item $N =  (N(T_1) \setminus \{ \hat{r}_1\} ) \cup \dots \cup  (N(T_\ell) \setminus \{ \hat{r}_\ell\} ) \cup \{r\}$,
		where $r$ is a new (root) node. 
		\item For the definition of $E(T)$ recall that each $\hat{r}_i$ is a leaf node in its decomposition. Let $e_i$ denote
		the edge between $\hat{r}_i$ and its parent. Then we define $E(T)$ as
		$E(T) = (E(T_1) \setminus \{ e_1\} ) \cup \dots \cup  (E(T_\ell) \setminus \{ e_\ell\} ) \cup R$ with
		$R = \{ [r,\hat{r}_1], \dots, [r,\hat{r}_\ell]\}$.
		\item $\lambda_r = \lambda_u$ and $B_r = B_u$.
		\item For every $v \in N \setminus \{r\}$, there exists exactly one $i$, such that $v \in N(T_i)$. 
		We set $\lambda_v = \lambda_{i,v}$ and $B_v = B_{i,v}$. 
	\end{itemize}
	Intuitively, the GHD $\mathcal{D}$ is obtained by taking in each GHD $\mathcal{D}_i$ the node $\hat{r}_i$ as the root
	node and combining all GHDs $\mathcal{D}_i$  to a single GHD by merging their root nodes to a single node $r$. 
	This is possible since all nodes $\hat{r}_i$ have the same $\lambda$-labels and bags. 
	It is easy to verify that the resulting GHD is indeed a GHD of width $\leq k$ of the extended subhypergraph $H' \cup Sp$.
\end{proof}

\begin{proof}(Completeness)
	For any hypergraph $H'$ and set of special edges $S_p$, we prove that if $\ghw(H' \cup S_p) \leq k$, then Function~Decompose on input $(H',S_p)$ returns a GHD of width $\leq k$ of $H' \cup S_p$.
	Again we proceed by induction on $\lvert H' \cup S_p \rvert$.
	
	The base case of $\lvert H' \cup S_p \rvert \leq 2$ is dealt with in lines~5-12 of Algorithm~\ref{alg:balsep}.
	In this case, we simply construct a GHD of width 1 for $\lvert H' \cup S_p \rvert$ and return it.
	
	For the induction step, suppose $\lvert H' \cup S_p \rvert \leq j$, for some $j \geq 2$, and $\ghw(H' \cup S_p) \leq k$.
	Then, Function~Decompose on input $(H',S_p)$ returns a GHD of width $\leq k$ of $H' \cup S_p$.
	Now assume that $\lvert H' \cup S_p \rvert = j + 1 \geq 3$ and that $\ghw(H' \cup S_p) \leq k$.
	We have to show that Function~Decompose on input $(H',S_p)$ returns a GHD of width $\leq k$ of $H' \cup S_p$.
	
	By Lemma~\ref{lm:lemma1}, we may assume w.l.o.g., that GHD $\mathcal{D} = \langle T, (B_u)_{u \in T}, (\lambda_u)_{u \in T} \rangle$ 
	is in normal form and that $B_r$ is a balanced separator of $H' \cup S_p$ for the root node $r$ of~$\mathcal{D}$. 
	Let $S \subseteq E(H)$ denote the $\lambda$-label of $r$, i.e., $\lambda_r = S$.
	
	When Function~Decompose is called on input $(H',S_p)$, the while loop in lines~14-27 eventually generates all possible balanced separators of size $\leq k$ of $H' \cup S_p$, unless it returns on line~27 before the end of the loop.
	Remember that the object $\mathit{BalSepIt}$ not only generates edge separators with edges in $E(H)$, i.e., the original hypergraph on which Algorithm~\ref{alg:balsep} is called, 
	but it also uses edges in $f(H,k)$, i.e., subedges of edges in $E(H)$.
	Thus, at some point, in line~15, we will choose the separator $\lambda_u = S$, equivalently, $B_u = B_r$.
	
	Let $C_1, \dots, C_\ell$ denote the $[B_u]$-components of $H' \cup S$.
	Since $\mathcal{D}$ is in normal form, we know that the root node $r$ has $\ell$ child nodes such that 
	$C_i = \mathit{exCov}(T_{i})$, where $T_{i}$ is the subtree in $T$ rooted at $n_i$ for $i \in \{1, \dots, \ell\}$. 
	Recall from Definition~\ref{def:excov} that we write $\mathit{exCov}(T_{i})$ to denote the 
	set of edges exclusively covered by $T_{i}$.
	
	Now consider the extended subhypergraph $C_i \cup \{B_u\}$ for arbitrary $i \in \{1, \dots, \ell\}$.
	Since  $B_r$ is a balanced separator of $H' \cup S_p$, we have $\lvert C_i \cup \{B_u\} \rvert \leq j$.
	Moreover, there exists a GHD $\mathcal{D}_i$ of $C_i \cup \{B_u\}$, namely the subtree of $\mathcal{D}$ induced by the nodes in $T_i$ plus $r$. 
	Hence, by the induction hypothesis, calling Function~Decompose with the input corresponding to the 
	extended subhypergraph $C_i \cup \{B_u\}$ returns a valid GHD. 
	This means that, in our call of Function~Decompose with input $(H',Sp)$, we have the following behavior:
	\begin{itemize}
		\item On line~19, Function~Decompose is called recursively for all extended subhypergraphs $C_i \cup \{B_u\}$.
		\item Each call of the Function~Decompose returns a GHD for the respective extended subhypergraph.
		\item The results of these recursive calls are collected in line~21 in the variable $\mathit{subDecomps}$.
		\item Hence, after exiting the loop in lines~18-26, the return statement in line~27 is executed.
	\end{itemize}
	The call to Function~\ref{alg:buildghd} correctly produces the desired GHD as discussed in the soundness proof. 
	Finally, Function~Decompose indeed returns a GHD of width $\leq k$ of $H' \cup S_p$.
\end{proof}

\section{HyperBench Benchmark and Tool}
\label{sec:hyperbench}

In this section, we describe \emph{HyperBench} - our new benchmark and web tool.
We first introduce our system and test environment used for the experiments, 
then we present a new method to extract simple conjunctive queries from complex SQL queries.
Finally, we describe the CQs and CSPs we have collected.

\subsection{System and Test Environment}
Our system is composed of two libraries: a C++ library with implementations of the algorithms and a Java library for processing of SQL queries and hypergraphs.
We have extended the algorithm \detkdecomp from~\cite{DBLP:journals/jea/GottlobS08} and based our program on a new implementation, which we call \newdetkdecomp.
The resulting C++ library contains the implementations of all the algorithms presented in Sections~\ref{sec:ghd-algs}~and~\ref{sec:fhd-algs} and comprises around 8500 lines of code.
We designed our Java library \emph{hg-tools} for preprocessing of SQL queries and collecting hypergraph statistics.
It uses the open source libraries \emph{JSqlParser}~\cite{sw:jsqlp} for SQL processing and \emph{JGraphT}~\cite{jgrapht} to deal with graph data structures.
Our two libraries are available at~\url{https://github.com/dmlongo/newdetkdecomp} and \url{https://github.com/dmlongo/hgtools}.

All the experiments reported in this paper were performed on a cluster of 10 workstations each running Ubuntu 16.04.
Every workstation has the same specification and is equipped with two Intel Xeon E5-2650 (v4) processors each having 12 cores and 256-GB main memory.
Since all algorithms are single-threaded, we could run several experiments in parallel.
For all upcoming runs of our algorithms we set a timeout of 3600s.

\subsection{Translation of complex SQL Queries into Hypergraphs}
\label{sec:translation}
Since the applicability of structural decomposition methods is limited to conjunctive queries, we have devised a strategy to transform also more complex queries into (a collection of) hypergraphs.
Our program \emph{hg-tools} takes as input a complex SQL query and produces a collection of simpler SQL queries.
When facing a complex query, e.g., one containing nested queries, the idea is to identify simple queries, extract them and transform them separately.

A conjunctive query can be expressed in SQL as a SELECT-FROM-WHERE statement in which the WHERE clause is only allowed to be a conjunction of equality conditions.
Moreover, such a query must not contain negation, disjunction and subqueries, i.e., nested SELECT statements.
For our purposes, we neglect the SELECT clause because in our experiments we focus on computing decompositions instead of answering the queries.
Hence, only the hypergraph structure determined by the FROM and WHERE clauses is important.

Query~\ref{q:simple-query} is an example of an SQL query.
The FROM clause uses two instances of the relation $\mathit{tab(a,b,c)}$ and in the WHERE clause there are some conditions on the two relation instances.
The condition on line~3 is a join and falls in the scope of conjunctive queries.
Though the condition on line~4 is not conjunctive, it is just a comparison with a constant value and it does not influence the query structure.
Nevertheless, the condition on line~5 involves a negation, thus Query~\ref{q:simple-query} is not a conjunctive query.
This kind of queries is not allowed by our framework.

\lstinputlisting[caption={A simple SQL query},label={q:simple-query}]{queries/simple-query.sql}

Nevertheless, what really matters in defining the query structure are the relationships between the different variables involved in the query, in particular join conditions.
For this reason, when faced with a query that is not conjunctive, we consider a simplified version that contains the conjunctive core of the original query.
In case of Query~\ref{q:simple-query}, we drop lines~4-5 and perform our experiments on the rest of the query.

Real world SQL queries can be rather complicated and, in particular, they can contain nested SELECT statements, which we simply call \emph{subqueries}.
The presence of subqueries automatically makes the query non conjunctive, but we have decided to extract the single queries and analyze their conjunctive cores separately.
Query~\ref{q:nested-queries} is a query with nested SELECT statements.
The subquery on lines~4-6 can be examined separately.
The subquery on lines~7-9 cannot be examined separately because it contains a reference to a table defined in an external query and, more precisely, it requires the evaluation of the subquery for the value of the current row of that table.

\lstinputlisting[caption={A complex SQL query},label={q:nested-queries}]{queries/nested-queries.sql}

In SQL, a subquery can appear in different places in a query.
Depending in which statement or condition it appears, we treat it differently:
\begin{itemize}
	\item If a query is of the form $q_1 \circ \dots \circ q_n$, where each $q_i$ is a query and $\circ \in \{\cup, \cap, \setminus\}$, we extract the single queries $q_i$ and process them separately.
	\item If a subquery appears in the FROM clause, we convert it into a view (see Section~\ref{sec:translation-hg}).
	\item If a subquery contains a reference to an external query, as in lines~7-9 of Query~\ref{q:nested-queries}, it must be discarded.
\end{itemize}

\subsection{Extracting Simple Queries}
\label{sec:extract-simple-queries}
In general, an SQL query can contain subqueries.
We want to extract them and either integrate them into the \emph{main} query, when possible, or analyze them separately.
In order to do that in such a way that the end result closely resembles the original query as much as possible,
we build a graph representing the dependencies between the subqueries.
At the end of the process, we extract queries which are independent and eliminate those which are mutually dependent.

We say that a subquery $s_1$ \emph{depends} on a subquery $s_2$ if the result of $q_1$ can be computed only after computing the result of $q_2$.
The dependency graph of a query $Q$ is a graph $G=(S,D)$ where the set $S$ contains the nodes corresponding to subqueries of $Q$
and $(s_1,s_2) \in D$ is an arrow, for $s_1,s_2 \in S$, if $s_1$ depends on $s_2$.
Given a query $Q$, we create its dependency graph $G$ as follows:
\begin{enumerate}
	\item Create a node $q \in S$ representing the outer query.
	\item For each nested query $s_i$ of $q$, create a new node $s_i \in S$
	and an edge $(q,s_i) \in D$.
	\item If $s_i$ contains a reference to a table defined in any ancestor $s_j$,
	create an edge $(s_i,s_j)$.
	\item Recursively examine $s_i$ for nested queries.
\end{enumerate}
Once we have built the graph, we identify the nodes which are involved in cycles and eliminate them.
In particular, we consider $q$ as a root and navigate the graph.
Whenever we find a node having an edge pointing at an ancestor, we eliminate it together with all of its incoming and outgoing edges.
Eventually, we end up with a forest in which we extract a query from each node.

\begin{figure}
	\centering
	\begin{tikzpicture}
	\node[circle,draw,minimum size=1mm]	(q)		at	(1.5,2)	{$q$};
	\node[circle,draw,minimum size=1mm]	(s1)	at	(0,0)	{$s_1$};
	\node[circle,draw,minimum size=1mm]	(s2)	at	(3,0)	{$s_2$};
	
	\draw	[->]	(q) to (s1);
	\draw	[->]	(q) to (s2);
	\draw	[->]	(s2) to [bend left=45] (q);
	\end{tikzpicture}
	\caption{Dependency graph of Query~\ref{q:nested-queries}}
	\label{fig:dep-graph}
\end{figure}
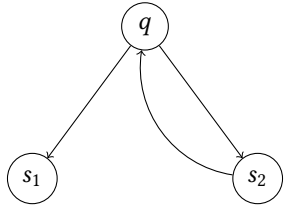

The dependency graph $G$ of Query~\ref{q:nested-queries} is shown in Figure~\ref{fig:dep-graph}.
The node $q$ corresponds to the outer query and it is the root of the graph.
The nodes $s_1$ and $s_2$ represent the subqueries in lines~4-6 and lines~7-9 of Query~\ref{q:nested-queries}, respectively.
As the result of Query~\ref{q:nested-queries} can be computed only after computing the subqueries,
the two edges $(q,s_1)$ and $(q,s_2)$ are present in the graph.
Since $s_2$ refers to the table $t_1$ defined in $q$ (line~9), $G$ contains also the edge $(s_2,q)$.
After the creation, we look for cycles in the graph.
In this case, we see that there is no way to evaluate $s_2$ independently from $q$.
Then, we remove $s_2$ and all of its incident edges.
Finally, we extract a simple query from each node of the remaining graph.

\subsection{Converting Simple Queries into Hypergraphs}
\label{sec:translation-hg}
Once we have extracted and simplified subqueries, we are left with simple SQL queries of the type
\begin{equation}
\label{eq:simple-sql}
\mathrm{SELECT} \; r_{i_1}.A_{j_1}, \dots, r_{i_z}.A_{j_z} \; \mathrm{FROM} \; r_1 ,\dots, r_m \; \mathrm{WHERE} \; \mathit{cond}
\end{equation}
such that $\mathit{cond}$ is a conjunction of conditions of the form $ r_i.A = r_j.B $ or $ r_i.A = c $, where $c$ is a constant.
Such queries are equivalent to conjunctive queries, thus it is easy to draw a connection to a CQ and transform it into a hypergraph.
Nevertheless, in our case it makes more sense to go directly from an SQL query to the hypergraph.

Let $Q$ be an SQL query of the form~(\ref{eq:simple-sql}), then the hypergraph $H_Q = (V(H_Q),E(H_Q))$ corresponding to $Q$ is obtained as follows.
We first build the hypergraph induced by the FROM clause.
Consider a relation $r(A_1,\dots,A_\ell)$ in the FROM clause.
For each attribute $A_i$ of $r$, we create a vertex $v_{A_i} \in V(H_Q)$.
Then, we create the edge $r = \{v_{A_1},\dots,v_{A_\ell}\} \in E(H_Q)$.
Now, we modify the hypergraph according to the conditions in the WHERE clause.
Let $\mathit{cond}$ be such a condition. It can be of two forms:
\begin{itemize}
	\item If $\mathit{cond}$ is of the form $r_i.A = r_j.B$, we merge vertices $v_A$ and $v_b$ and modify their incident edges.
	W.l.o.g. assume $v_A$ itself becomes the merged vertex. For each edge $r \in \{ e \in E(H_Q) \mid v_B \in e \}$, we remove $r$ from $E(H_Q)$ and add a new edge $r' = (r \setminus \{v_B\}) \cup \{v_A\}$.
	\item If $\mathit{cond}$ is of the form $r_i.A = c$, with $c$ constant, we remove $v_A$ from $V(H_Q)$ and, for each edge $r \in \{ e \in E(H_Q) \mid v_A \in e \}$, we remove $r$ from $E(H_Q)$ and add a new edge $r' = r \setminus \{v_A\}$.
\end{itemize}
At the end of this procedure, we eliminate empty edges and multiple edges.
Also, in our setting SELECT clauses do not contribute to the query structure, thus we simply ignore them.

We also take into account logical views.
SQL views are virtual tables which are recreated every time the view is called, thus we have decided to expand the main query by adding the view inside it.
Consider Query~\ref{q:convert-with-view}.
The view \emph{crossView} has a cross structure composed by the two relations $t_1$,$t_2$ intersecting in the node $b$ in the middle, see Figure~\ref{fig:convert-hg}~(a).
The main query adds two relations $t_1$,$t_2$ which are distinct from the ones in the view \emph{crossView} and intersect it in four points, thus creating two cycles.
The end result is depicted in Figure~\ref{fig:convert-hg}~(b).
The resulting query can finally be converted into a hypergraph with the algorithm described above.

\lstinputlisting[caption={An SQL query with a view},label={q:convert-with-view}]{queries/convert-with-view.sql}

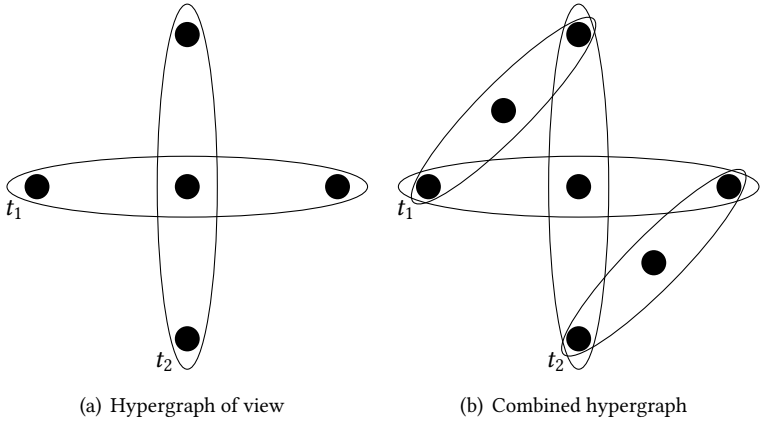
\begin{figure}
	\centering
	\subfigure[Hypergraph of view]{
		\begin{tikzpicture}
		\node[circle,fill,minimum size=1mm]	(t1a)	at	(0,2)	{};
		\node[circle,fill,minimum size=1mm]	(t1b)	at	(2,2)	{};
		\node[circle,fill,minimum size=1mm]	(t1c)	at	(4,2)	{};
		\node[circle,fill,minimum size=1mm]	(t2a)	at	(2,0)	{};
		\node[circle,fill,minimum size=1mm]	(t2c)	at	(2,4)	{};
		
		\node	(t1)	at	(-0.3,1.7)	{$t_1$};
		\node	(t2)	at	(1.7,-0.3)	{$t_2$};
		
		\draw (2,2) ellipse [x radius=2.4, y radius=0.4];
		\draw[rotate around={90:(2,2)}] (2,2) ellipse [x radius=2.4, y radius=0.4];
		\end{tikzpicture}
	}
	\subfigure[Combined hypergraph]{
		\begin{tikzpicture}
		\node[circle,fill,minimum size=1mm]	(t1a)	at	(0,2)	{};
		\node[circle,fill,minimum size=1mm]	(t1b)	at	(2,2)	{};
		\node[circle,fill,minimum size=1mm]	(t1c)	at	(4,2)	{};
		\node[circle,fill,minimum size=1mm]	(t2a)	at	(2,0)	{};
		\node[circle,fill,minimum size=1mm]	(t2c)	at	(2,4)	{};
		\node[circle,fill,minimum size=1mm]	(o1)	at	(1,3)	{};
		\node[circle,fill,minimum size=1mm]	(o2)	at	(3,1)	{};
		
		\node	(t1)	at	(-0.3,1.7)	{$t_1$};
		\node	(t2)	at	(1.7,-0.3)	{$t_2$};
		
		\draw (2,2) ellipse [x radius=2.4, y radius=0.4];
		\draw[rotate around={90:(2,2)}] (2,2) ellipse [x radius=2.4, y radius=0.4];
		\draw[rotate around={45:(1,3)}] (1,3) ellipse [x radius=1.7, y radius=0.35];
		\draw[rotate around={45:(3,1)}] (3,1) ellipse [x radius=1.7, y radius=0.35];
		\end{tikzpicture}
	}
	\caption{Hypergraphs for Query~\ref{q:convert-with-view}}
	\label{fig:convert-hg}
\end{figure}

\subsection{From CSPs to Hypergraphs}
\label{sec:cspHypergraphs}
An important part of our benchmark consists of instances of Constraint Satisfaction Problems.
The set we have collected presents different characteristics w.r.t. the ones found in CQs,
thus their analysis offers a more varied picture of the hypergraphs encountered in applications.
For this reason we have retrieved CSP instances which have a significant practical aspect.

The source of most of our CSPs is the website XCSP~\cite{xcsp}.
XCSP3 is an XML-based format used to represent constraint satisfaction problems.
The language offers a wide variety of options to represent the most common constraints and frameworks,
making it a solid intermediate format between different solvers.
They also organize solver competitions for which they use their instances as a benchmark.

From XCSP we have selected a total of 1,953 instances with less than 100 extensional constraints
such that all constraints are extensional.
These instances are divided into CSPs
from concrete applications, called \textit{CSP Application} in the sequel (\cspApp{} instances), 
and randomly generated CSPs, called \textit{CSP Random} below (\cspRand{} instances).

The instances we have fetched from the website are written in well-structured XML files
in which variables and constraints are explicitly defined through the use of specific XML tags.
The transformation of these instances into hypergraphs did not require a specific methodology
since the authors of the XCSP3 format provide an extensive library for parsing the instances
where most of the process is already automatized.
Obviously, we still had to convert the object in memory into a hypergraph.
To this end, we have reimplemented the behavior of some callback methods in such a way that,
whenever the program reads a variable, it adds a vertex to the hypergraph, and,
whenever it reads a constraint, it adds an edge containing the
vertices corresponding to the variables affected by the constraint.

Our collection of CSPs also includes a third class, which we call \emph{CSP Other}.
These instances have been used in previous hypertree width analyses available at~\url{https://www.dbai.tuwien.ac.at/proj/hypertree/}.
This set contains interesting examples coming from industry and a variety of different test examples~\cite{ganzow2005csp}.
In particular, a part of the hypergraphs is obtained from Daimler Chrysler and represents circuits and systems.
A second part is a hypergraph translation of the circuits belonging to the well-known benchmark library
of the IEEE International Symposium on Circuits and Systems (ISCAS).
Finally, some hypergraphs correspond to grids extracted from pebbling problems.
Since the instances are provided already as hypergraphs, no additional processing was necessary to incorporate them.

\subsection{Hypergraph Benchmark}
Our benchmark contains \hgTotal{} hypergraphs, which have been converted from CQs and CSPs collected from various sources.
Out of these \hgTotal{} hypergraphs, \hgNew{} hypergraphs have never been used in a hypertree width analysis before.
The hypertree width of \cqOld{} CQs and of \cspOld{} CSPs has been analyzed in~\cite{DBLP:journals/jea/GottlobS08}, \cite{berg2017maxsat}, and/or \cite{DBLP:journals/pvldb/BonifatiMT17,DBLP:conf/www/BonifatiMT19}.
An overview of all instances of CQs and CSPs is given in Table~\ref{tab:bench-overview}. 
They have been collected from various publicly available benchmarks and repositories
of CQs and CSPs. In the first column, the names of each collection of CQs and CSPs are given together with 
references where they were first published. In the second column we display the number of hypergraphs
extracted from each collection. The $\hw$ of the CQs and CSPs in our benchmark will be discussed
in detail in Section~\ref{sec:hg-stats}. To get a first feeling of the $\hw$ of the various sources, 
we mention the number of cyclic hypergraphs (i.e., those with $\hw\geq 2$) in the last column.
When gathering the CQs, we proceeded as follows: of the huge benchmark reported in~\cite{DBLP:journals/pvldb/BonifatiMT17}, 
we have only included CQs, which were detected as having $\hw \geq 2$ in \cite{DBLP:journals/pvldb/BonifatiMT17}.
Of the other huge repository reported in~\cite{DBLP:conf/www/BonifatiMT19}, we included the hypergraphs corresponding to the 273,974 unique SPARQL queries with $\hw \geq 2$. Even though the queries are unique, most of them share the same hypergraph structure. Thus, after removing duplicates on the hypergraph level, we ended up with 354 unique hypergraphs with $\hw \geq 2$.
Of the big repository reported in~\cite{DBLP:conf/sigmod/JainMHHL16}, we have included those CQs, which are not trivially acyclic (i.e., they have at least 3 atoms). Of all the small collections of queries, we have included all. It follows a detailed description of the different benchmarks.

\begin{table}
	\centering
	\caption{Overview of benchmark instances}
	\begin{tabular}{llrr}
		\toprule
		& \multicolumn{1}{l}{Benchmark}		& \multicolumn{1}{c}{No. instances}		& \multicolumn{1}{c}{$\hw\geq 2$} \\
		\midrule
		\parbox[t]{2mm}{\multirow{11}{*}{\rotatebox[origin=c]{90}{CQs}}}
		& \textsc{SPARQL}~\cite{DBLP:journals/pvldb/BonifatiMT17}									& 70 (out of 26,157,880)	& 70	\\
		& \textsc{Wikidata}~\cite{DBLP:conf/www/BonifatiMT19}										& 354 (out of 273,947)		& 354	\\
		& \textsc{LUBM}~\cite{DBLP:conf/pods/BenediktKMMPST17,DBLP:journals/ws/GuoPH05}				& 14						& 2		\\
		& \textsc{iBench}~\cite{DBLP:conf/pods/BenediktKMMPST17,DBLP:journals/pvldb/ArocenaGCM15}	& 40						& 0		\\
		& \textsc{Doctors}~\cite{DBLP:conf/pods/BenediktKMMPST17,DBLP:conf/icde/GeertsMPS14}		& 14						& 0		\\
		& \textsc{Deep}~\cite{DBLP:conf/pods/BenediktKMMPST17}										& 41						& 0		\\
		& \textsc{JOB} (IMDB)~\cite{DBLP:journals/vldb/LeisRGMBKN18}								& 33						& 7		\\
		& \textsc{TPC-H}~\cite{BenediktCQs,tpch}													& 29						& 1		\\
		& \textsc{TPC-DS}~\cite{tpcds}																& 228						& 5		\\
		& \textsc{SQLShare}~\cite{DBLP:conf/sigmod/JainMHHL16}										& 290 (out of 15,170)		& 1		\\
		& \textsc{Random}~\cite{DBLP:journals/vldb/PottingerH01}									& 500						& 464	\\
		\midrule
		\parbox[t]{2mm}{\multirow{3}{*}{\rotatebox[origin=c]{90}{CSPs}}}
		& \textsc{Application}~\cite{xcsp} 															& 1,090  					& 1,090 \\
		& \textsc{Random}~\cite{xcsp}																& 863   					& 863	\\
		& \textsc{Other}~\cite{DBLP:journals/jea/GottlobS08,berg2017maxsat}							& 82						& 82	\\ 
		\midrule
		\multicolumn{1}{r}{} & \multicolumn{1}{l}{\textit{Total:}} 									& \textit{\hgTotal{}}		& \textit{2,939} \\
		\bottomrule
	\end{tabular}%
	\label{tab:bench-overview}%
\end{table}%

Our benchmark contains \cqApp{} CQs from five main sources~\cite{BenediktCQs,DBLP:conf/pods/BenediktKMMPST17,DBLP:journals/pvldb/BonifatiMT17,DBLP:conf/www/BonifatiMT19,DBLP:conf/sigmod/JainMHHL16} and a set of \cqRand{} randomly generated queries using the query generator of~\cite{DBLP:journals/vldb/PottingerH01}. 
In the sequel, we shall refer to the former queries as {\it CQ Application\/}, and to the latter
as {\it CQ Random\/}.
The CQs analyzed in~\cite{DBLP:journals/pvldb/BonifatiMT17} constitute a big repository of CQs -- namely 26,157,880 CQs stemming from
SPARQL queries. The queries come from real-users of SPARQL endpoints and their hypertree width was already determined in~\cite{DBLP:journals/pvldb/BonifatiMT17}. Almost all of these CQs were shown to be acyclic. Our analysis comprises 70 CQs from~\cite{DBLP:journals/pvldb/BonifatiMT17}, 
which (apart from few exceptions) are essentially the ones in~\cite{DBLP:journals/pvldb/BonifatiMT17} with $\hw \geq 2$. 
In particular, we have analyzed all 8 CQs with highest $\hw$ among the CQs analyzed in~\cite{DBLP:journals/pvldb/BonifatiMT17} 
(namely, $\hw = 3$).
Bonifati et al. carried on this line of work and examined a bigger repository of SPARQL queries coming from Wikidata in~\cite{DBLP:conf/www/BonifatiMT19}.
This repository of 208,215,209 SPARQL queries was originally released by Malyshev et al. with the study in~\cite{DBLP:conf/semweb/MalyshevKGGB18}.
Bonifati et al. kindly sent us the unique 273,947 SPARQL queries with $\hw \geq 2$ examined in~\cite{DBLP:conf/semweb/MalyshevKGGB18}.
We extracted 354 different hypergraphs and all of them have $\hw = 2$.

The \textsc{LUBM}~\cite{DBLP:journals/ws/GuoPH05}, \textsc{iBench}~\cite{DBLP:journals/pvldb/ArocenaGCM15}, \textsc{Doctors}~\cite{DBLP:conf/icde/GeertsMPS14}, and \textsc{Deep} scenarios have been recently used to evaluate the performance of chase-based systems~\cite{DBLP:conf/pods/BenediktKMMPST17}. Their queries were especially tailored towards the evaluation of query answering tasks of such systems. 
Note that the \textsc{LUBM} benchmark~\cite{DBLP:journals/ws/GuoPH05} is a widely used standard benchmark for the evaluation of Semantic Web repositories. Its queries are designed to measure the performance of those repositories over large datasets. Strictly speaking, the \textsc{iBench} is a tool for generating schemas, constraints, and mappings for data integration tasks. However, in~\cite{DBLP:conf/pods/BenediktKMMPST17}, 40 queries were created for tests with the 
\textsc{iBench}. We therefore refer to these queries as \textsc{iBench}-CQs here. In summary, we have incorporated all queries that were either contained in the original benchmarks or created/adapted for the tests in~\cite{DBLP:conf/pods/BenediktKMMPST17}.

The goal of the Join Order Benchmark (JOB)~\cite{DBLP:journals/vldb/LeisRGMBKN18} was to evaluate the impact of a good join order on the performance of query evaluation in standard 
RDBMSs. Those queries were formulated over the real-world dataset Internet Movie Database (IMDB). All of the queries have between 3 and 16 joins. Clearly, as the goal was to measure the impact of a good join order, those 33 queries are of higher complexity, hence 7 out of the 33 queries have $\hw \geq 2$.

The Transaction Processing Performance Council (TPC) is a well-known non-profit organization that develops benchmarks for the evaluation of DBMSs.
Given their broad industry-wide relevance and since they reflect common workloads in decision support systems, we included the TPC-H~\cite{tpch} and the TPC-DS~\cite{tpcds} benchmarks.
In~\cite{DBLP:conf/pods/FischlGLP19}, we analyzed the TPC-H queries from the GitHub repository 
originally provided by Michael Benedikt and Efthymia Tsamoura~\cite{BenediktCQs} for the work on~\cite{DBLP:conf/pods/BenediktKMMPST17}.
Nevertheless, for this paper we downloaded the original dataset from~\cite{tpch} and extracted the queries according to the 
methodology introduced in Sections~\ref{sec:translation} -- \ref{sec:translation-hg}.
From the original set of 22 complex queries, we extracted 29 simple queries.
The TPC-DS benchamrk is more complex than TPC-H and it contains more queries.
Indeed, from the original set of 113 complex queries, we extracted 228 simple queries.

From SQLShare~\cite{DBLP:conf/sigmod/JainMHHL16}, a multi-year SQL-as-a-service experiment with 
a large set of real-world queries, we extracted 15,170 queries by considering all queries in the log files.
These queries are divided into two sets: materialized views and usual queries which could possibly make use of the materialized views.
Also, the whole dataset gathers data from different databases and the link between queries and databases is not explicitly defined.
In order to execute the experiments, we had to clean the queries from trivial errors impeding the parsing,
link the queries to the right database schema, incorporate the materialized views and resolve ambiguities in the query semantics.
After removing queries with complex syntactical errors, we obtained 12,483 queries.
As a next step we used the algorithm from Section~\ref{sec:extract-simple-queries} to obtain a set of CQs and,
after removing duplicates, we got a collection of 6,086 simple SQL queries.
From this set we eliminated 5,796 queries with $\leq$ 2 atoms (whose acyclicity is immediate)
and ended up with 290 queries.

The random queries were generated with a tool that stems from the work on query answering using views in~\cite{DBLP:journals/vldb/PottingerH01}. 
The query generator allows 3 options: chain/star/random queries. Since the former two types are trivially acyclic, 
we only used the third option. Here it is possible to supply several parameters for the size of the generated queries. In terms of the 
resulting hypergraphs, one can thus fix the number of vertices, number of edges and arity. We have generated 500 CQs with 5 -- 100 vertices, 
3 -- 50 edges and arities from 3 to 20.
These values correspond to the values observed for the {\it CQ Application\/} hypergraphs.
However, even though these size values have been chosen similarly, 
the structural properties of the hypergraphs in the two groups {\it CQ Application\/} and {\it CQ Random\/} 
differ significantly,
as will become  clear from our analysis in Section~\ref{sec:hg-stats}.

As was detailed in Section~\ref{sec:cspHypergraphs}, 
our benchmark currently contains \cspTot{} hypergraphs from CSP instances, 
out of which 1,953 instances were obtained 
from \url{xcsp.org} (see also \cite{xcsp}). 
These instances, 
in turn,
are divided into CSPs
from concrete applications, called \textit{CSP Application} in the sequel (\cspApp{} instances), 
and randomly generated CSPs, called \textit{CSP Random} below (\cspRand{} instances). 
In addition, we have \cspOth{} CSP instances from 
provided at \url{https://www.dbai.tuwien.ac.at/proj/hypertree/}, 
which we refer to as {\em CSP Other}.

Our HyperBench benchmark consists of the CQ and CSP instances converted to hypergraphs.
In Figure~\ref{fig:hg-sizes}, we show the number of vertices, the number of edges and the arity (i.e., the maximum size of the edges) as 
three important metrics of the size of each hypergraph. 
The smallest are those coming from {\it CQ Application\/} (most of them have up to 10 edges), while the 
hypergraphs coming from CSPs can be significantly larger (up to 2993 edges).
Although some hypergraphs are very big, more than 50\% of all hypergraphs have maximum arity less than 5. 
In Figure~\ref{fig:hg-sizes} we can easily compare the different types of hypergraphs, e.g.\ hypergraphs of arity greater than 20  only exist in the application classes;
the \emph{CSP Other} class contains the highest portion of 
hypergraphs with a big number of vertices and edges, etc.

\begin{figure}[htbp]
	\centering 
	\fbox{\includegraphics[width=0.45\textwidth]{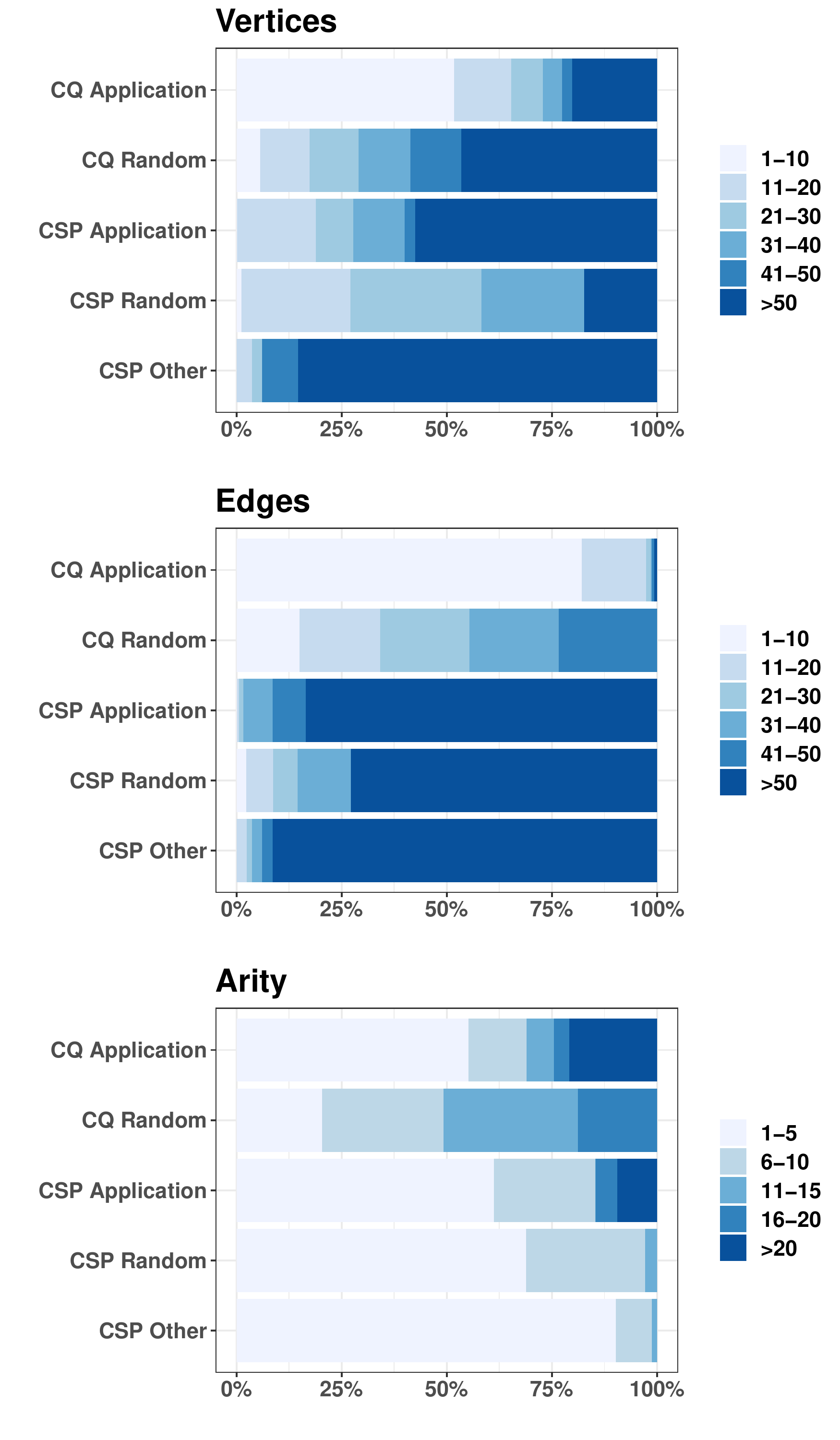}}
	\caption{Hypergraph Sizes} 
	\label{fig:hg-sizes}         
\end{figure}    

The hypergraphs and the results of our analysis can be accessed through our web tool, available at \url{http://hyperbench.dbai.tuwien.ac.at}.

\section{Experiments}
\label{sec:experiments}
In this section, we present the empirical results obtained with the HyperBench benchmark.
On the one hand, we want to get an overview of the hypertree width of 
the various types of hypergraphs in our benchmark (cf.\ Goal 2 in Section~\ref{sec:introduction}).
On the other hand, we want to find out how realistic 
the restriction to low values for certain hypergraph invariants is (cf.\ Goal 3 stated in Section~\ref{sec:introduction}). 
After this first analysis of the structural properties of the hypergraphs we collected,
we perform an evaluation of the different $\ghw$ algorithms presented in Section~\ref{sec:ghd-algs}.
Finally, we propose and evaluate two algorithms for computing approximated FHDs.

\subsection{Hypergraph Properties}
\label{sec:hg-stats}
In \cite{DBLP:conf/pods/FischlGP18,DBLP:journals/corr/abs-2002-05239}, several invariants of 
hypergraphs were used to make the problems
$\checkp{(\mathrm{GHD},k)}$  and $\checkp{(\mathrm{FHD},k)}$ 
tractable or, at least, easier to approximate.
We thus investigate the following properties (cf.\ Definitions \ref{def:bip}~--~\ref{def:vc}):
\begin{itemize}
	\item {\it Deg}: the degree of the underlying hypergraph
	\item {\it BIP}: the intersection size
	\item $c$-{\it BMIP}: the $c$-multi-intersection size for $c \in \{3,4\}$
	\item {\it VC-dim}: the VC-dimension 
\end{itemize}

\begin{table}[t]
	\centering 
	\caption{Properties of all benchmark instances}
	\label{tab:hg-props}%
	\begin{tabular}{crrrrr}
		\multicolumn{6}{c}{\it CQ Application} \\
		\toprule
		\multicolumn{1}{c}{$i$}  & \multicolumn{1}{c}{Deg} & 
		\multicolumn{1}{c}{BIP} & \multicolumn{1}{c}{3-BMIP} & 
		\multicolumn{1}{c}{4-BMIP} & \multicolumn{1}{c}{VC-dim} \\
		\midrule
		0	&	0	&	74	&	394	&	647	&	72	\\
		1	&	74	&	721	&	673	&	456	&	484	\\
		2	&	320	&	286	&	45	&	10	&	557	\\
		3	&	253	&	17	&	1	&	0	&	0	\\
		4	&	181	&	6	&	0	&	0	&	0	\\
		5	&	73	&	9	&	0	&	0	&	0	\\
		$>$5&	212	&	0	&	0	&	0	&	0	\\
		\bottomrule
	\end{tabular}%
	\quad
	\begin{tabular}{crrrrr}
		\multicolumn{6}{c}{\it CQ Random} \\
		\toprule
		\multicolumn{1}{c}{$i$}  & \multicolumn{1}{c}{Deg} & 
		\multicolumn{1}{c}{BIP} & \multicolumn{1}{c}{3-BMIP} & 
		\multicolumn{1}{c}{4-BMIP} & \multicolumn{1}{c}{VC-dim} \\
		\midrule
		0     & 0     & 1     & 16    & 49    & 0	\\
		1     & 1     & 17    & 77    & 125   & 20	\\
		2     & 15    & 53    & 90    & 120   & 133	\\
		3     & 38    & 62    & 103   & 74    & 240	\\
		4     & 31    & 63    & 62    & 42    & 106	\\
		5     & 33    & 71    & 47    & 28    & 1	\\
		$>$5  & 382   & 233   & 105   & 62    & 0	\\ 
		\bottomrule
	\end{tabular}
	\;
	\begin{tabular}{crrrrr}
		\multicolumn{6}{c}{\it CSP Application} \\
		\toprule
		\multicolumn{1}{c}{$i$}  & \multicolumn{1}{c}{Deg} & 
		\multicolumn{1}{c}{BIP} & \multicolumn{1}{c}{3-BMIP} & 
		\multicolumn{1}{c}{4-BMIP} & \multicolumn{1}{c}{VC-dim} \\
		\midrule
		0		&	0	&	0	&	596	&	597	&	0	\\
		1		&	0	&	1030&	459	&	486	&	0	\\
		2		&	596	&	59	&	34	&	7	&	1064\\
		3		&	1	&	0	&	1	&	0	&	26	\\
		4		&	1	&	0	&	0	&	0	&	0	\\
		5		&	2	&	0	&	0	&	0	&	0	\\
		$>$5	&	490	&	1	&	0	&	0	&	0	\\
		\bottomrule
	\end{tabular}
	\quad
	\begin{tabular}{crrrrr}
		\multicolumn{6}{c}{\it CSP Random} \\
		\toprule
		\multicolumn{1}{c}{$i$}  & \multicolumn{1}{c}{Deg} & 
		\multicolumn{1}{c}{BIP} & \multicolumn{1}{c}{3-BMIP} & 
		\multicolumn{1}{c}{4-BMIP} & \multicolumn{1}{c}{VC-dim} \\
		\midrule
		0     & 0     & 0     & 0     & 0     & 0	\\
		1     & 0     & 200   & 200   & 238   & 0	\\
		2     & 0     & 224   & 312   & 407   & 220	\\
		3     & 0     & 76    & 147   & 95    & 515	\\
		4     & 12    & 181   & 161   & 97    & 57	\\
		5     & 8     & 99    & 14    & 1     & 71	\\
		$>$5  & 843   & 83    & 29    & 25    & 0	\\
		\bottomrule
	\end{tabular}%
	\;
	\begin{tabular}{crrrrr}
		\multicolumn{6}{c}{\it CSP Other} \\
		\toprule
		\multicolumn{1}{c}{$i$}  & \multicolumn{1}{c}{Deg} & 
		\multicolumn{1}{c}{BIP} & \multicolumn{1}{c}{3-BMIP} & 
		\multicolumn{1}{c}{4-BMIP} & \multicolumn{1}{c}{VC-dim} \\
		\midrule
		0	&	0	&	0	&	1	&	6	&	0	\\
		1	&	0	&	7	&	36	&	39	&	0	\\
		2	&	1	&	36	&	23	&	16	&	50	\\
		3	&	5	&	29	&	20	&	21	&	25	\\
		4	&	19	&	10	&	2	&	0	&	0	\\
		5	&	4	&	0	&	0	&	0	&	0	\\
		$>$5&	53	&	0	&	0	&	0	&	0	\\ 
		\bottomrule
	\end{tabular}
\end{table}%

The results obtained from computing 
{\it Deg}, {\it BIP}, $3$-{\it BMIP},  $4$-{\it BMIP}, and {\it VC-dim}
for the hypergraphs in the HyperBench benchmark 
are shown in Table~\ref{tab:hg-props}.

Table~\ref{tab:hg-props} has to be read as follows: In the first column, we distinguish different values of the various hypergraph metrics. 
In the columns labeled ``Deg``, ``BIP``, etc., we indicate for how many instances each metric has a particular value. 
For instance, by the last row in the second column, 
only 212 non-random CQs
have degree~$>5$. Actually, for most CQs, the degree is less than 10.
Moreover, for the BMIP, already with intersections of 3 edges, we get 3-multi-intersection size $\leq 2$ 
for almost all non-random CQs. Also the VC-dimension is $\leq 2$. 

For CSPs, all properties may have higher values. However, we note a significant 
difference between randomly generated CSPs and the rest: 
For hypergraphs in the groups {\em CSP Application\/} and {\em CSP Other\/}, 
543 (46\%) hypergraphs have a high degree ($>$5), but nearly all instances have BIP or BMIP of less than 3. And most instances have a VC-dimension of at most 2. 
In contrast, nearly all random instances have a 
significantly higher degree (843 out of 863 instances with a degree $>5$). 
Nevertheless, many instances have small BIP and 
BMIP. For nearly all hypergraphs (838 out of 863) 
we have 4-multi-intersection size $\leq 4$. 
For 7 instances the computation of the VC-dimension
timed out. For all others, the VC-dimension is $\leq 5$ for random CSPs.
Clearly, as seen in Table~\ref{tab:hg-props}, the random CQs resemble the random CSPs a lot more than the CQ and CSP Application instances. For example, random CQs have similar to random CSPs high degree (382, corresponding 
to 76\%, with degree $>5$), higher BIP and BMIP. Nevertheless, similar to random CSPs, the values for BIP and BMIP are still small for many random CQ instances.

To conclude, for the proposed properties, in particular BIP/BMIP and VC-dimension,
most of the hypergraphs in our benchmark indeed have low values.

\medskip
\begin{figure}
	\centering 
	\fbox{\includegraphics[width=0.45\textwidth]{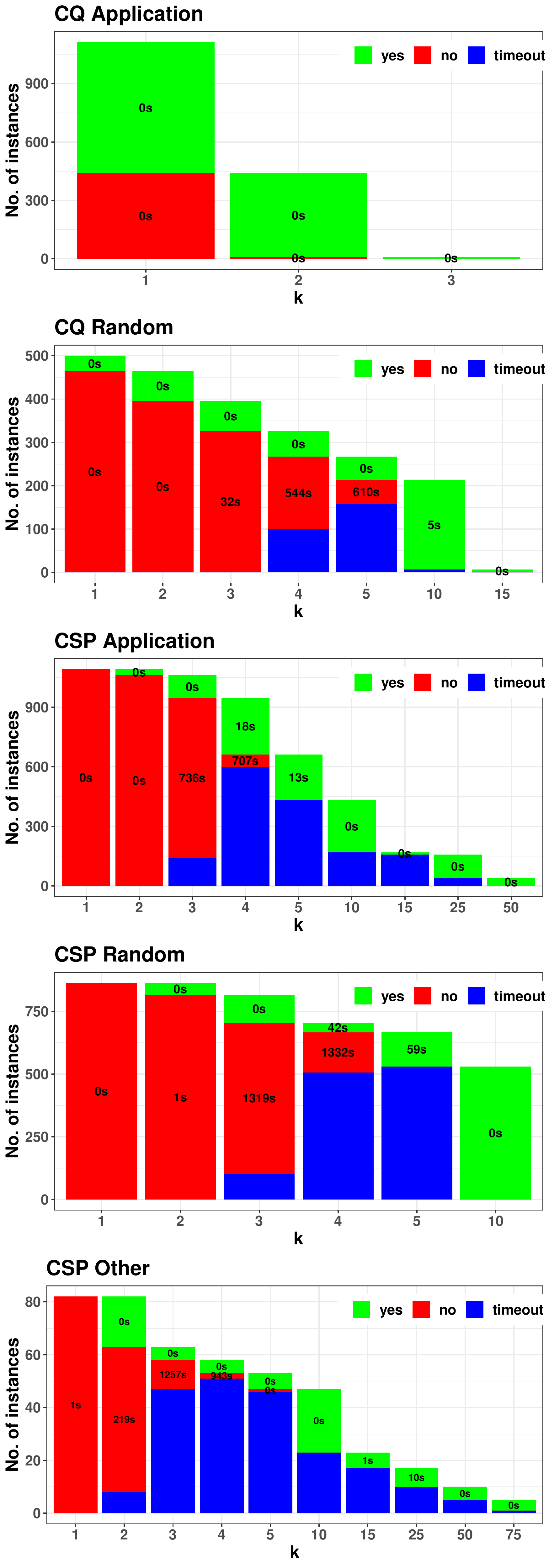}}
	\caption{HW analysis (labels are avg. runtimes in s)} 
	\label{fig:hw}      
\end{figure}

\subsection{Hypertree Width}
We have systematically applied the $\hw$-compu\-tation from~\cite{DBLP:journals/jea/GottlobS08} 
to all hypergraphs in the benchmark. The results are summarized in Figure~\ref{fig:hw}. 
In our experiments, we proceeded as follows. 
We used the same classification of instances we used in the previous experiments, i.e., we distinguish the following classes:
\emph{CQ Application}, \emph{CQ Random}, \emph{CSP Application}, \emph{CSP Random}, and \emph{CSP Other}.
For every hypergraph $H$, we first tried to solve the $\checkp{(\mathrm{HD},k)}$ problem for $k = 1$.
In case of {\em CQ Application\/}, we thus got \cqAppKOneYes{} yes-answers
and \cqAppKOneNo{} no-answers. The number in each bar indicates the average runtime to 
find these yes- and no-instances, respectively. 
Here, the average runtime was ``0'' (i.e., less than 1 second). 
For {\it CQ Random\/} we got 36 yes- and 464 no-instances with an average runtime below 1 second.
For all CSP-instances, we only got no-answers. 

In the second round, we tried to solve the $\checkp{(\mathrm{HD},k)}$ problem for 
$k = 2$ for all hypergraphs that yielded a no-answer for $k=1$. 
Now the picture is a bit more diverse:
\cqAppKTwoYes{} of the remaining \cqAppKOneNo{} CQs from {\it CQ Application\/} yielded a yes-answer in less than 1 second.
For the hypergraphs stemming from {\it CQ Random}, only 
68 instances yielded a yes-answer (in less than 1 second on average), while 396 
instances yielded a no-answer in less than 7 seconds on average.
The hypergraphs relative to CSP offer a different picture.
The classes \emph{CSP Application}, \emph{CSP Random} and \emph{CSP Other} have 29, 47 and 19 yes-instances, respectively.
Only 8 instances from \emph{CSP Other} give rise to a timeout (i.e., the program did not terminate within 3,600 seconds),
while all the other instances give a no-answer within the timeout.
Interestingly, the our $\hw$-algorithm gives a no-answer for 1877 instances of \emph{CSP Application} and \emph{CSP Random} in less than 1 second,
while it took the algorithm 219 seconds on average to answer ``no" for 55 instances of \emph{CSP Other}.
This shows that the class \emph{CSP Other} contains instances which are difficult to decompose.

This procedure is iterated by incrementing $k$ and running the $\hw$-computation for all instances, that either yielded a no-answer or a timeout in the previous round. For instance, for queries from {\it CQ Application\/}, 
one further round is needed after the second round.
In other words, we confirm the observation of low $\hw$, which was 
already made for CQs of arity $\leq 3$ in~\cite{DBLP:journals/pvldb/BonifatiMT17,DBLP:conf/www/BonifatiMT19,DBLP:conf/sigmod/PicalausaV11}.
For the hypergraphs stemming from {\em CQ Random} (resp.\ CSPs), 396 (resp.\ 1940) instances 
are left in the third round, of which 70 (resp.\ 232) yield a yes-answer in less than 1 second on average, 326 (resp.\ 1415) instances yield a no-answer in 32 (resp.\ 988) seconds on average and 
no (resp.\ 293) instances yield a timeout. 
Note that, as we increase $k$, 
the average runtime  and the percentage of timeouts first increase up to a certain point and then they decrease.
This is due to the fact that,
as we increase $k$, the number of combinations of edges to be considered in 
each $\lambda$-label (i.e., the function $\lambda_u$ at each node $u$ 
of the decomposition) increases. In principle, we have to test 
$\mathcal{O} (n^{k})$ combinations, where $n$ is the number of edges.
However, if $k$ increases beyond a certain point, then it gets easier to ``guess'' a $\lambda$-label since an increasing portion of the 
$\mathcal{O} (n^{k})$ possible combinations leads to a solution (i.e., 
an HD of desired width).

To answer the question in {\it Goal 2}, it is indeed the case that for a big number of instances,
the hypertree width is small enough to allow for efficient evaluation of CQs or CSPs: all instances of non-random CQs have $\hw \leq 3$ no matter 
whether their arity is bounded by 3 (as in case of SPARQL queries) or not; and a large portion (at least 1027, i.e., ca.\ 50\%) 
of all 2035 CSP instances have $\hw \leq 5$. 
In total, including random CQs, 2,427 (66.5\%) out of \hgTotal{} instances have $\hw \leq 5$. And, out of these, 
we could determine the exact hypertree width for 2,356 instances; the others may even have lower~$\hw$.

\subsection{Correlation Analysis}
Finally, we have analyzed the pairwise correlation between all properties. 
Of course, the different intersection sizes (BIP, 3-BMIP, 4-BMIP) are highly correlated.
Other than that, we observe quite a strong correlation of the arity with the number of vertices and the hypertree width.
Moreover, there is a significant correlation between number of vertices and arity and between number of vertices and hypertree width.
Clearly, the correlation between arity and hypertree width 
is mainly due to the CSP instances and the random CQs since, for non-random CQs, 
the $\hw$ never increases beyond~$3$,  independently of the arity.

A graphical presentation of all pairwise correlations is given in Figure~\ref{fig:corr}. Here, large, dark circles indicate a high correlation, 
while small, light circles stand for low correlation. 
Blue circles indicate a positive  correlation while red circles stand for a negative correlation.  
In~\cite{DBLP:conf/pods/FischlGP18}, it has been  argued that Deg, BIP, 3-BMIP, 4-BMIP and VC-dim are non-trivial restrictions to achieve tractability. It is interesting to note that, 
according to the correlations shown in Figure~\ref{fig:corr}, 
these properties have almost no impact on the hypertree width of our hypergraphs.
This underlines the usefulness of these restrictions in the sense that (a) they make the GHD computation and 
FHD approximation easier~\cite{DBLP:conf/pods/FischlGP18} but (b) low values of degree, (multi-)intersection-size, or VC-dimension do not pre-determine low values of the widths.

\begin{figure}[htbp]
	\centering 
	\fbox{\includegraphics[width=0.5\textwidth]{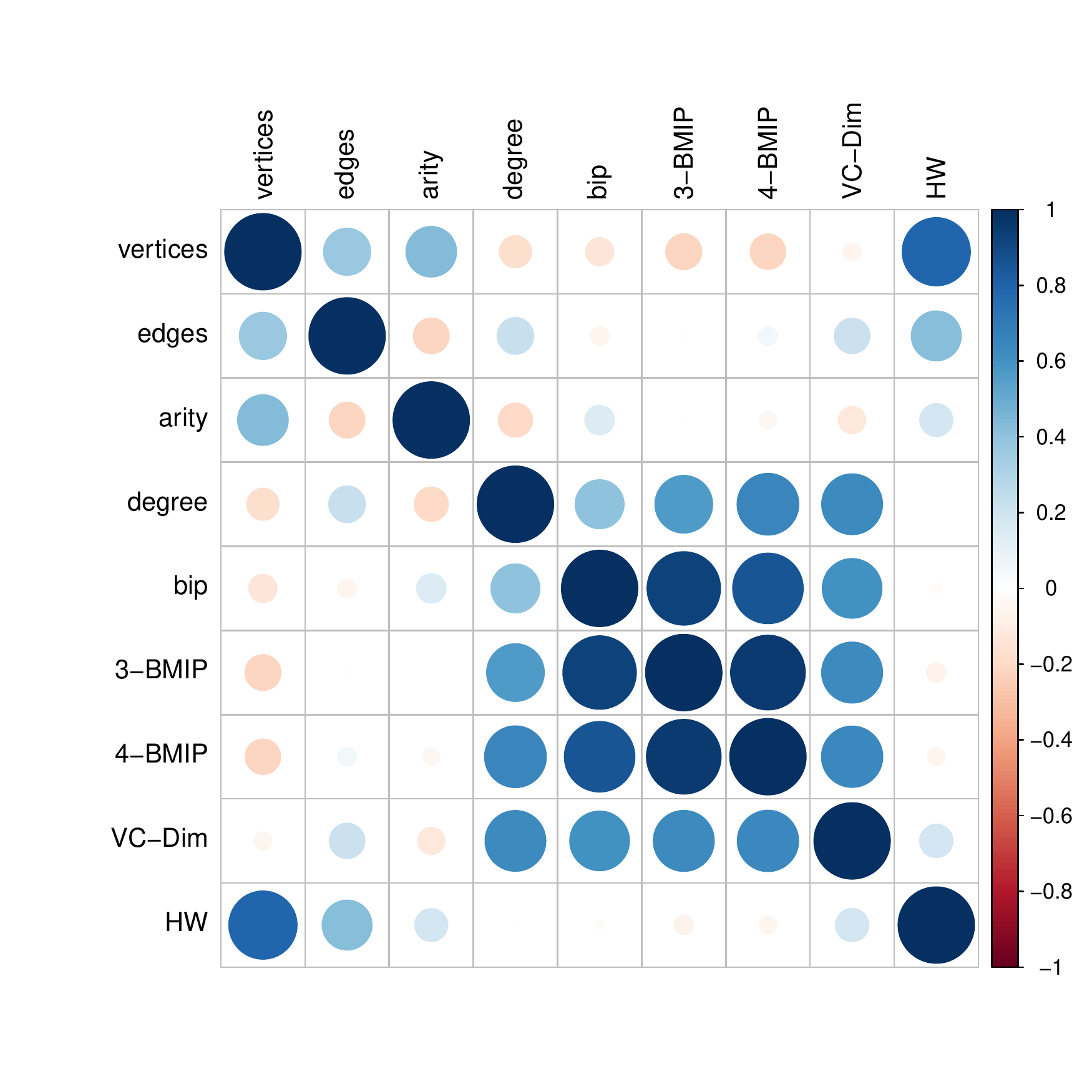}}
	\caption{Correlation analysis} 
	\label{fig:corr}     
\end{figure}

\subsection{Comparison of $\ghw$ Algorithms}
Here we report on empirical results for the three $\ghw$-algorithms described in Section~\ref{sec:ghd-algs}.
We have run the programs on each hypergraph from the HyperBench up to hypertree width $6$, 
trying to get a smaller $\ghw$ than $\hw$. 
We have thus run the $\ghw$-algorithms with the following parameters:
for all hypergraphs $H$ with $\hw(H) = k$ (or $\hw \leq k$ and, due to timeouts, we do not know if $\hw \leq k-1$ holds), 
where $k \in \{3,4,5,6\}$, try to solve the $\checkp{(\mathrm{GHD},k-1)}$ problem. 
In other words, we just tried to improve the width by 1. 
Clearly, for $\hw(H) \in \{1,2\}$, no improvement is possible since, in this case, $\hw(H) = \ghw(H)$ holds.

In Table~\ref{tab:ghw_instances_solved}, for each algorithm, we report on the number of ``successful'' attempts
to solve the $\checkp{(\mathrm{GHD},k-1)}$ problem for hypergraphs with $\hw =k$.
Here ``successful'' means that the program terminated within 1 hour. 
For instance, for the 310 hypergraphs with $\hw = 3$ in the HyperBench, 
\globalbip{} terminated in 128 cases (i.e., 41\%) when trying to solve $\checkp{(\mathrm{GHD},2)}$.
The average runtime of these ``successful'' runs was 537 seconds.
For the 386 hypergraphs with $\hw = 4$, \globalbip{} terminated in 137 cases (i.e., 35\%) 
with average runtime 2809 when trying to solve the $\checkp{(\mathrm{GHD},3)}$ problem.
For the 886 hypergraphs with $\hw \in \{5,6\}$, \globalbip{} only terminated in 13 cases (i.e., 1.4\%). 
Overall, it turns out that the set $f(H,k)$ may be very big (even though it is 
polynomial if $k$ and $i$ are constants). Hence, $H'$ can become  considerably bigger than $H$. 
This explains the frequent timeouts in the \globalbip{} column in Table~\ref{tab:ghw_instances_solved}.

\begin{table}
	\centering 
	\caption{GHW algorithms with avg. runtimes in s}
	\label{tab:ghw_instances_solved}    
	\begin{tabular}{cccccccc} 
		\toprule
		\multirow{2}{*}{$\hw \rightarrow \ghw$} & \multirow{2}{*}{\it Total}
		& \multicolumn{2}{c}{\globalbip{}} & \multicolumn{2}{c}{\localbip{}} & \multicolumn{2}{c}{\balsep{}} \\
		
		& & \emph{yes (s)} & \emph{no (s)} & \emph{yes (s)} & \emph{no (s)} & \emph{yes (s)} & \emph{no (s)} \\
		\midrule
		$3 \rightarrow 2$	& 310 & -		 & 128 (537)  & -		& 195 (162)	& - & 307 (12)	\\
		$4 \rightarrow 3$	& 386 & -		 & 137 (2809) & - 		& 54 (2606) & - & 249 (54)	\\
		$5 \rightarrow 4$	& 427 & - 		 & -		  & - 		& - 		& - & 148 (13)	\\ 
		$6 \rightarrow 5$ 	& 459 & 13 (162) & -		  & 13 (60) & - 		& - & 180 (288)	\\ 
		\bottomrule
	\end{tabular}
\end{table}

The results obtained with \localbip{} are shown in the corresponding column.
Interestingly, for the hypergraphs with $\hw = 3$, the  ``local'' computation performs significantly better
(namely 63\% solved with average runtime 162 seconds rather than 41\% with average runtime 537 seconds). 
In contrast, for the hypergraphs with $\hw = 4$, the ``global'' computation is significantly more successful.
For $\hw \in \{5,6\}$, the ``global'' and ``local'' computations are equally bad.
A possible explanation for the reverse behavior of ``global'' and ``local'' computation in case of 
$\hw = 3$ as opposed to $\hw = 4$ is that the restriction of the ``global'' set 
$f(H,k)$ of subedges to the ``local'' set $f_u(H,k)$ at each node $u$ seems to be
quite effective for the hypergraphs with $\hw = 3$.
In contrast, the additional cost of having to compute $f_u(H,k)$ at each node $u$ becomes counter-productive,
when the set of subedges thus eliminated is not significant.
It is interesting to note that the sets of solved instances 
of the global computation and the local computation are
incomparable, i.e., in some cases one method is better, while
in other cases the other method is better.

If we look at the number of solved instances in Table~\ref{tab:ghw_instances_solved}, we see that the 
recursive algorithm via balanced separators (reported in the last column labeled \balsep{}) has the least number 
of timeouts due to the fast identification of negative instances (i.e., those with no-answer), 
where it often detects quite fast that a given hypergraph 
does not have a balanced separator of desired~width.
As $k$ increases, the performance of the balanced separators approach deteriorates. 
This is due to $k$ in the exponent of the running time of our algorithm, i.e.\ we need to check 
for each of the possible $\mathcal{O}(n^{k+1})$ 
combinations of $\leq k$ edges if it constitutes a balanced separator. 
Note that the balanced separators approach only terminated in case of no-answers.

\begin{table}
	\centering 
	\caption{GHW of instances with average runtime in s}
	\label{tab:ghw}
	\begin{tabular}{cccc}
		\toprule
		$\hw \rightarrow \ghw$ & yes & no & timeout \\
		\midrule
		$3 \rightarrow 2$	& 0			& 309 (10)	& 1		\\
		$4 \rightarrow 3$	& 0			& 262 (57)	& 124	\\
		$5 \rightarrow 4$	& 0			& 148 (13)	& 279	\\
		$6 \rightarrow 5$	& 18 (129)	& 180 (288)	& 261	\\
		\bottomrule
	\end{tabular}
\end{table}

We now look at Table \ref{tab:ghw}, where we report for all hypergraphs with $\hw \leq k$ and $k \in \{3,4,5,6\}$,
whether $\ghw \leq k-1$ could be verified. 
To this end, we run our three algorithms (\globalbip{}, \localbip{} and \balsep{}) in parallel and stop the computation,
as soon as one terminates (with answer ``yes'' or ``no'').
The number in parentheses refers to the average runtime 
needed by the fastest of the three algorithms in each case.
A timeout occurs if none of the three algorithms terminates within 3,600 seconds.
It is interesting to note that in the vast majority of cases, no improvement of the width is possible when we switch from $\hw$ to $\ghw$:
in 97\% of the solved cases with $\hw \leq 6$, which form 65\% of all instances, $\hw$ and $\ghw$ have identical values.
Actually, we think that the high percentage of the {\em solved cases\/} 
gives a more realistic picture than the percentage of {\em all cases\/}
for the following reason: our algorithms (in particular, the ``global'' and ``local'' computations) need particularly long time for negative instances. This is due to the fact that in a negative case, ``all'' possible choices of $\lambda$-labels for a node $u$ in the GHD have to be tested before we can be sure that no GHD of $H$ (or, equivalently, no HD of $H'$) of desired width exists. 
Hence, it seems plausible that the timeouts are mainly due to negative instances.
This also explains why our new \balsep{} algorithm, which is 
particularly well suited for negative instances,
has the least number of timeouts.

We conclude this section with a final observation: in Figure~\ref{fig:hw}, we had many cases,
for which only some upper bound $k$ on the $\hw$ could be determined, namely those cases, where the attempt to solve $\checkp{(\mathrm{HD},k)}$
yields a yes-answer and the attempt to solve $\checkp{(\mathrm{HD},k-1)}$ gives a timeout.
In several such cases, we could get (with the balanced separator approach) a 
no-answer for the $\checkp{(\mathrm{GHD},k-1)}$ problem, which implicitly gives a 
no-answer for the problem $\checkp{(\mathrm{HD},k-1)}$. 
In this way, our new $\ghw$-algorithm is also profitable for the $\hw$-computation:
for 827 instances with $\hw \leq 6$, we were not able to determine the exact hypertree width. 
Using our new $\ghw$-algorithm, we closed this gap for 297 instances; for these instances $\hw = \ghw$ holds. 

To sum up, we now have a total of 2,356 (64.5\%) instances for which we determined the exact $\hw$
and a total of 1,984 instances (54.4\%) for which we determined the exact $\ghw$.
Out of these, 1,968 instances had identical values for $\hw$ and $\ghw$.
In 16 cases, we found an improvement of the width by 1 when moving from $\hw$ to $\ghw$, namely from $\hw = 6$ 
to $\ghw = 5$. In 2 further cases, we could show $\hw \leq 6$ and $\ghw \leq 5$, but the attempt to check 
$\hw = 5$ or $\ghw = 4$ led to a timeout.
Hence, in response to {\it Goal 6}, $\hw$ is equal to $\ghw$ in 54.4\% of the cases if we consider all instances
and in 68.2\% of the cases (1,968 of 2,886) with small width ($\hw \leq 6$). However, if we consider the fully 
solved cases (i.e., where we have the precise value of $\hw$ and $\ghw$), then $\hw$ and $\ghw$ coincide in 
99.2\% of the cases (1,968 of 1,984).

\subsection{Fractionally Improved Decompositions}
\label{sec:fhd-algs}
The algorithms for computing FHDs in the literature are very expensive and even the tractability result presented in~\cite{DBLP:conf/pods/FischlGP18} 
involves a double exponential ``constant''.
Here we propose two algorithms for computing an FHD of a hypergraph when we already have a GHD: \improvehd{}, and \fracimprovehd{}.
They differ in the compromise between computational cost and quality of the approximation.

The first algorithm we present is based on a simple observation: given an (G)HD, we could substitute its integral edge covers with fractional edge covers and obtain an FHD.
Formally, let $\mathcal{D} = \langle T, (B_u)_{u \in T}, (\lambda_u)_{u \in T} \rangle$ be either a GHD or an HD.
Our algorithm \improvehd{} computes an FHD $\mathcal{D'} = \langle T', (B'_u)_{u \in T'}, (\gamma_u)_{u \in T'} \rangle$ where:
\begin{itemize}
	\item $T' = T$.
	\item $\forall u \in T', B'_u = B_u$.
	\item $\forall u \in T', \gamma_u$ is a fractional edge cover of $B'_u$.
\end{itemize}
To obtain such $\mathcal{D'}$, it is sufficient to iterate over the nodes of $\mathcal{D}$ and substitute $\lambda_u$ with $\gamma_u$.
Since computing a fractional edge cover is polynomial and we assume to have already computed an HD to start with, the whole algorithm is very efficient.
Nevertheless, it is clear that there is a strong dependence on the starting HD.
This is unsatisfactory and so we devised a more sophisticated algorithm.

The algorithm we describe here gets rid of the dependence on a particular HD and computes a fractionally improved (G)HD with a fixed improvement threshold.
given a hypergraph $H$ and two numbers $k,k' \geq 1$, where $k$ is an upper bound on the $\hw$ and $k'$ the desired fractionally improved $\hw$.
We search for an FHD $\mathcal{D}'$ with  $\mathcal{D}' = \mathit{SimpleImproveHD}(\mathcal{D})$ for some HD $\mathcal{D}$ of $H$ with $\mathit{width}(\mathcal{D})  \leq k$ and $\mathit{width}(\mathcal{D}') \leq k'$.
In other words, this algorithm searches for the best fractionally improved HD over all HDs of width $\leq k$. 
Hence, the result is independent of any concrete HD.

The experimental results with these algorithms for computing fractionally improved HDs 
are summarized in Tables~\ref{tab:fhw-improve} and~\ref{tab:fhw-fracimprove}. 
We have applied these algorithms to all 
hypergraphs for which $\hw \leq k$ 
with $k \in \{2,3,4,5,6\}$ is known from Figure~\ref{fig:hw}.
The various columns of the Tables~\ref{tab:fhw-improve} and~\ref{tab:fhw-fracimprove} are as follows: the first column (labeled $\hw$) refers to the 
(upper bound on the) $\hw$ according to Figure~\ref{fig:hw}. The next 3 columns, labeled $\geq 1$, 
$[0.5,1)$, and $[0.1,0.5)$ tell us, by how much the width can be improved (if at all) 
if we compute an FHD by one of the two algorithms. 
We thus distinguish the 3 cases if, for a hypergraph of $\hw \leq k$, 
we manage to construct an FHD of width $k-c$ for 
$c \geq 1$, $c \in [0.5,1)$, or  $c \in [0.1,0.5)$. The column with label ``no'' refers to the cases where no
improvement at all or at least no improvement by $c \geq 0.1$ was possible.
The last column counts the number of timeouts. 

For instance, in the first row of Table~\ref{tab:fhw-improve}, we see that
(with the \improvehd{} algorithm and starting from the HD obtained by the $\hw$-computation of Figure~\ref{fig:hw})
out of 595 hypergraphs 
with $\hw = 2$, no improvement was possible in 419 cases. In the remaining 176 cases, 
an improvement to a width of at most $2 - 0.5$ was possible in 40 cases and an improvement to $k-c$ with 
$c \in [0.1,0.5)$ was possible in 136 cases. For the hypergraphs with $\hw = 3$ in Figure~\ref{fig:hw}, 
almost half of the hypergraphs (141 out of 310) allowed at least some improvement, in particular, 104 by 
$c \in [0.5,1)$ and 12 even by at least 1. The improvements achieved for the hypergraphs with $\hw \leq 4$
and $\hw \leq 5$ are less significant.

\begin{table}[t]
	\centering 
	\caption{Instances solved with \improvehd{}}
	\label{tab:fhw-improve}%
	\begin{tabular}{crrrrr}
		\toprule
		\multicolumn{1}{c}{$\hw$}  & \multicolumn{1}{c}{$\geq1$} & 
		\multicolumn{1}{c}{$[0.5,1)$} & \multicolumn{1}{c}{$[0.1,0.5)$} & \multicolumn{1}{c}{no} & \multicolumn{1}{c}{timeout} \\
		\midrule
		2   & 0     & 136   & 40    & 419   & 0 \\
		3	& 12 	& 104	& 25 	& 169 	& 0 \\
		4	& 9 	& 55 	& 11	& 311 	& 0 \\
		5	& 20	& 14	& 11	& 382 	& 0 \\
		6   & 12    & 60    & 80    & 307   & 0 \\
		\bottomrule 
	\end{tabular}%
\end{table}%

\begin{table}[t]
	\centering 
	\caption{Instances solved with \fracimprovehd{}}
	\label{tab:fhw-fracimprove}%
	\begin{tabular}{crrrrr}
		\toprule
		\multicolumn{1}{c}{$\hw$}  & \multicolumn{1}{c}{$\geq1$} & 
		\multicolumn{1}{c}{$[0.5,1)$} & \multicolumn{1}{c}{$[0.1,0.5)$} & \multicolumn{1}{c}{no} & \multicolumn{1}{c}{timeout} \\
		\midrule
		2   & 0   & 194 & 46 & 353  & 2   \\
		3   & 14  & 116	& 21 & 135  & 24  \\
		4   & 11  & 81	& 2  & 8  	& 284 \\
		5   & 18  & 126	& 59 & 2  	& 222 \\
		6   & 28  & 149	& 95 & 4  	& 183 \\
		\bottomrule
	\end{tabular}%
\end{table}%

The results obtained with our \fracimprovehd{} implementation are displayed in  Table~\ref{tab:fhw-fracimprove}. 
We see that the number of hypergraphs which allow for a fractional improvement of the width by at least 0.5 
or even by 1 is often bigger than with \improvehd{} -- in particular in the cases where $k' \leq k$ with 
$k \in \{4,5\}$ holds. 
In the other cases, the results obtained with the naive \improvehd{} algorithm 
are not much worse than with the more sophisticated \fracimprovehd{} algorithm.

\section{Conclusion}
\label{sec:conclusion}

In this work, we have presented HyperBench, a new and comprehensive benchmark of hypergraphs derived from CQs and CSPs from various areas, together with the results of extensive empirical analyses with this benchmark.

\medskip

\noindent
{\bf Lessons learned.} The empirical study has brought many insights.
Below, we summarize the most important lessons learned from our studies.

\smallskip
$\bullet$ \
The finding of \cite{DBLP:journals/pvldb/BonifatiMT17,DBLP:conf/www/BonifatiMT19,DBLP:conf/sigmod/PicalausaV11} that 
non-random CQs have low hypertree width 
has been confirmed by our analysis, even if
(in contrast to SPARQL queries)
the arity of the CQs  is not bounded  by 3. 
For random CQs and CSPs, we have detected a correlation between the arity and the hypertree width, although 
also in this case, the increase of the $\hw$ with increased arity is not dramatic. 

\smallskip
$\bullet$ \
In \cite{DBLP:conf/pods/FischlGP18}, several hypergraph invariants were identified, which make the computation of GHDs and the approximation of FHDs tractable. We have seen that, at least for 
non-random instances, these invariants indeed tend to have low values.

\smallskip
$\bullet$ \
The reduction of the $\ghw$-computation problem to the $\hw$-computation problem in case of low intersection size
turned out to be more problematical than the theoretical trac\-tability results from \cite{DBLP:conf/pods/FischlGP18} had suggested. Even the improvement by ``local'' computation of the additional subedges did not help much.  
However, we were able to improve this significantly by presenting 
a new algorithm based on ``balanced separators''. In particular for negative instances (i.e., those with a no-answer), this approach proved very effective.

\smallskip
$\bullet$ \
An additional benefit of the new $\ghw$-algorithm based on ``balanced separators'' is that it allowed us to also fill gaps in the $\hw$-computation. Indeed, in several cases, 
we managed to verify 
$\hw \leq k$ for some $k$ but we could not show $\hw \not\leq k-1$, due to a timeout for
$\checkp{(\mathrm{HD},k-1)}$.
By establishing $\ghw \not\leq k-1$ with our 
new GHD-algorithm, we have implicitly shown $\hw \not\leq k-1$. This allowed us to 
compute the exact $\hw$ of many further~hypergraphs.

\smallskip
$\bullet$ \
Most surprisingly, the discrepancy between $\hw$ and $\ghw$ is much lower than expected. 
Theoretically, only the upper bound $\hw \leq 3 \cdot \ghw + 1$ is known. 
However, in practice, when considering hypergraphs of $\hw \leq 6$,
we could show that in circa 54\% of all cases, $\hw$ and $\ghw$ are simply identical. 
Moreover, in {\em all\/} cases when one of our implementations of $\ghw$-computation terminated on instances
with $\hw \leq 5$, we got identical values for $\hw$ and $\ghw$. 

\medskip

\noindent
{\bf Future work.} Our empirical  study has also given us many hints for future directions of research. We find the following tasks particularly urgent and/or rewarding. 

\smallskip
$\bullet$ \
So far, we have only implemented the $\ghw$-computation in case of low
intersection size. In~\cite{DBLP:conf/pods/FischlGP18}, tractability of the  
$\checkp{(\mathrm{GHD},k)}$ problem was also proved for the more relaxed bounded multi-intersection size. Our empirical results
in Table~\ref{tab:hg-props}
show that, apart from the random CQs and random CSPs, 
the 3-multi-intersection size is $\leq 2$ in almost all cases. It seems therefore worthwhile to 
implement and test also the BMIP-algorithm from~\cite{DBLP:conf/pods/FischlGP18}. 

\smallskip
$\bullet$ \ 
The three approaches for $\ghw$-computation presented here turned out to have complementary strengths and weaknesses. 
This was profitable when running all three algorithms in parallel and taking the result of the first one that terminates (see Table~\ref{tab:ghw}). 
In the future, we also want to implement a more sophisticated combination of the various approaches: for instance, one could try to apply our new ``balanced separator'' algorithm recursively only down to a certain recursion depth (say depth 2 or 3) to split a big given hypergraph into smaller subhypergraphs and then continue with the 
``global'' or ``local'' computation from 
Section~\ref{sec:ghd-algs}. First promising results in this direction have recently been obtained in~\cite{DBLP:conf/ijcai/GottlobOP20}.

\smallskip
$\bullet$ \
Our new approach to $\ghw$-computation via ``balanced separators'' proved quite effective in our experiments. However, further theoretical underpinning 
of this 
approach is missing. The empirical results obtained for our 
new GHD algorithm via balanced separators suggest that the
number of balanced separators is often drastically 
smaller than the number of arbitrary separators. 
We want to determine a realistic upper bound 
on the number of balanced separators 
in terms of $n$ (the number of edges) and $k$ (an upper bound on the width). This will then allow us to compute also a realistic upper bound on the runtime of this new algorithm.

\smallskip
$\bullet$ \
We want to further extend the HyperBench 
benchmark and tool in several directions. 
We will thus incorporate further implementations of decomposition algorithms from the literature 
such as the GHD- and FHD computation in \cite{DBLP:journals/ipl/MollTT12} or the polynomial-time FHD computation for hypergraphs of bounded multi-intersection size in~\cite{DBLP:journals/corr/abs-2002-05239}. 
Moreover, we will continue to fill in hypergraphs from further sources of CSPs and CQs. For instance, in 
\cite{DBLP:journals/tods/AbergerLTNOR17,DBLP:conf/pods/CarmeliKK17,DBLP:conf/icde/GhionnaGGS07,DBLP:conf/cikm/GhionnaGS11}  
a collection of CQs for the experimental evaluations in those papers is mentioned. 
We will invite the authors to disclose these CQs and incorporate them into the HyperBench benchmark. 

\smallskip
$\bullet$ \
Finally, we want to make use of HyperBench to test the practical feasibility of using decompositions to evaluate CQs and solving CSPs.
To this end, we will extend our collection of hypergraphs with the data of these problems.
In other words, we want to include in our benchmark the relations corresponding to the database for CQs and constraints for CSPs.
Such a study would have as a primary goal to assess the usefulness of decompositions in solving related problems,
but it would also help identify which characteristics should decompositions have to serve this purpose.

\begin{acks}
This work was supported by the Austrian Science Fund (FWF):P30930-N35 in the context
of the project ``HyperTrac''. 
Georg Gottlob is a Royal Society Research Professor and acknowledges support by the Royal Society 
for the present work in the context of the project "RAISON DATA" 
(Project reference: RP\textbackslash R1\textbackslash 201074).
Davide Mario Longo's work was also supported by the FWF project W1255-N23.

We would like to thank Angela Bonifati, Wim Martens, and Thomas Timm for sharing most of the hypergraphs with $\hw \geq 2$ from their work~\cite{DBLP:journals/pvldb/BonifatiMT17,DBLP:conf/www/BonifatiMT19} and for their effort in anonymizing these hypergraphs, which was required by the license restrictions. 
\end{acks}

\bibliographystyle{ACM-Reference-Format}
\bibliography{hyperbench}

\end{document}